\documentclass[12pt]{article}
\usepackage{amssymb}
\usepackage{multirow}
\usepackage{color}
\usepackage{amsthm}
\usepackage{epsfig,amssymb,amsfonts,amsmath,graphicx,dsfont,cite,xfrac}
\usepackage{authblk}
\usepackage{subcaption}
\definecolor{mygray}{gray}{0.5}

\usepackage{cite}
\usepackage[colorlinks=true,linkcolor=blue,citecolor=red]{hyperref}

\newtheorem{theorem}{Proposition}

\title{Entanglement Generation through Coherent and Non-Coherent Control}
\author[${1}$]{Marco Enr\'iquez}
\author[${1}$]{Francisco Delgado}

\affil[${1}$]{\footnotesize Tecnologico de Monterrey, School of Engineering and Sciences}

\begin{document}
\date{}
\maketitle

\begin{abstract}
The controlled generation of quantum entanglement from separable states remains a central challenge in quantum information science. Here, we investigate entanglement generation using two related control paradigms: coherent path superposition of local unitary operations and stochastic implementations of Pauli channels under coherent control. We show that entangled states belonging to the Bell, GHZ and W classes, can be deterministically generated from fully separable inputs by coherently superposing alternative sets of local unitary transformations. Conditions on the local operators for entanglement generation are derived, and the resulting states are shown to be locally unitary equivalent to standard multipartite entangled states. We further extend the analysis to noisy scenarios, where separable mixed states evolve through pairs of Pauli channels arranged in path-superposition and indefinite causal order configurations. Closed-form expressions for the output states are obtained, and entanglement is quantified using concurrence. By exploring representative channel families across their parameter space, we identify regimes where stochastic entanglement emerges, determine the associated success probabilities, and characterize trade-offs between entanglement and purity.
\end{abstract}

\thispagestyle{empty}

\section*{Introduction}
Quantum entanglement plays an essential role in the development of new technologies \cite{nielsen1}. It enables protocols such as quantum teleportation, quantum cryptography, and dense coding, to mention some. At the heart of this distinctive quantum resource, two closely related open problems lie: the characterization of entanglement and its reliable generation in realistic physical systems.

Entanglement characterization itself is a nontrivial task. Determining whether a given quantum state is entangled can range from straightforward to computationally intractable, depending on the dimensionality of the system. Different entanglement measures have been shown to provide partial solutions, yet there is no universal and efficient method for arbitrary states \cite{bengtsson2017}. In parallel to characterization, the controlled generation of entanglement is an active area of research. Various strategies have been explored, including dissipative engineering \cite{yang2022}, adiabatic state preparation \cite{wu2017}, and the use of non-classical light-matter interactions \cite{bhattacharya2023, lamprou2024}. Unlike pure states, mixed states require more sophisticated techniques, such as the Peres-Horodecki positive partial transpose (PPT) criterion, convex roof constructions, and measures such as formation entanglement or negativity \cite{gour2025}. Mixed-state entanglement is also known to exhibit subtle and counterintuitive phenomena, including bound entanglement, where states are entangled, yet they cannot be distilled into pure entangled pairs using local operations and classical communication \cite{Horodecki1998,Horodecki2000,Horodecki2005}. Recent work has shown that entanglement in mixed states can be generated and enhanced through mechanisms absent in pure state settings \cite{yamasaki2022, rinp2022, moharramipour2024}. Moreover, far from the use of entangling operations, the introduction of a control system that rules these operations implies that entanglement can be generated at distance by measuring such a control system.

In fact, recent developments in quantum control have shown that operations need not be applied to a fixed classical structure. At the core of these new techniques, the concept of quantum indefiniteness is essential. A type of quantum indefiniteness, called an indefinite causal order (ICO), can be created combining the action of two quantum channels in a superposition of alternative orders \cite{chiribella2013, rubino2017}. Such ICO processes have been shown to provide advantages in communication \cite{Ebler2018, Procopio2019}, channel discrimination \cite{Bavaresco2021}, and parameter estimation \cite{Zhao2020, DelSanto2024}. An alternative control technique that shows quantum indefiniteness is the path superposition paradigm (PS), in which a quantum control system coherently selects between alternative operations arranged along different branches \cite{Abbott2018,Chiribella2019}. This kind of indefiniteness has been shown to bring about some new advantages in quantum communications \cite{Rubino2021}, quantum parametric estimation \cite{Delgado2023}, and quantum teleportation \cite{Mondal2025}. It also provides novel features in the manipulation of a two-level atom interacting with two quantized fields of the quantum cavity \cite{CastanosCervantes2024}.

However, the generation of entanglement using quantum communication architectures has evolved in some major theoretical directions. The more basic approach is the quantum circuit-based methodology, based on the application of unitary entanglement operations, commonly between qubit pairs, but as a series of steps to produce more extended entanglement or entanglement purification. Another approach is the use of measurement-induced entanglement (weak or still projective) \cite{sorensen2003, kim2012}, particularly by using ancilla subsystems to control the entanglement \cite{white2016, Grimaudo2020}, then finally measuring them to produce certain partial projections on the main systems spaces in the entangled states. This controlled structure implies the possibility of producing entanglement in distant systems via a previous basic entanglement, but then the possibility of generating a secondary entanglement through passive or imperfect communication channels where input states lie. Although possibly those methodologies could become stochastic, the proper control introduced could still be arranged to fit almost certain procedures. The interest in this entanglement generation lies in quantum processing resourcing, particularly distributed quantum computing. Also, in quantum memories and repeaters, together with quantum sensing networks and quantum internet. The entangling architectures seek to generate entanglement efficiently and then distribute it across distant nodes of the circuit.

In this work, we investigate entanglement generation arising from two closely related architectures: (i) coherent path superposition of local unitary operations, and (ii) stochastic arrangements of Pauli channels under path superposition. First, we demonstrate that maximally entangled states, including Bell, GHZ, and W states, can be generated from fully separable inputs by coherently superposing alternative sets of local unitary transformations. Conditions for maximal entanglement generation are derived, and the resulting states are shown to be locally unitary equivalent to the standard representatives of these entanglement classes. Second, we extend the analysis to noisy scenarios in which separable mixed states are processed through pairs of Pauli channels arranged under coherent control. We obtain closed analytical expressions for the output states and quantify the generated entanglement using the concurrence of two qubits. By exploring emblematic families of channels across their full parametric space, we identify the regions where stochastic entanglement emerges, analyze the associated success probabilities, and characterize purity–entanglement trade-offs.

The paper is structured as follows. In Section \ref{PSU}, we introduce a scheme for generating multipartite entangled states based on the coherent path superposition of local unitary operations, demonstrating how entanglement can emerge from fully separable inputs after a projective measurement on a control system, and deriving the necessary and sufficient conditions for obtaining maximally entangled Bell, GHZ, and W states. In Section \ref{StochasticSection}, we extend this framework to noisy scenarios by considering pairs of Pauli channels arranged under path superposition and indefinite causal order, and we derive general analytical expressions for the corresponding output states, highlighting their dependence on channel parameters and control configurations. In Section \ref{EntanglementSection}, we analyze the entanglement properties of these states using concurrence and purity as quantitative measures, examining several representative families of channels to identify regimes where stochastic entanglement arises, as well as the associated success probabilities and trade-offs with decoherence. Additionally, we provide a global characterization of entanglement generation across the full parameter space of Pauli channels, establishing the capabilities and limitations of coherent and non-coherent control strategies for entanglement generation. Finally, Section \ref{conclu} is devoted to provide some conclusions and perspectives of our work.

\section{$n$-qubit GHZ and W states generation via path superposition}\label{PSU}
In this section, we discuss a scheme for generating maximally entangled states. We begin with a two-qubit system prepared in a fully separable state. A control qubit coherently selects between two local operations $U_0\otimes U_1$ and $\widetilde U_0\otimes \widetilde U_1$, which act on the initial state. Since both operations are local, they cannot generate entanglement. However, we show that when these two alternatives are placed in a coherent superposition of paths, entanglement can emerge after a projective measurement on the control system. The operator implementing this controlled transformation can be written as
\begin{equation}\label{UST}
    {\cal U}_2=U_0\otimes U_1 \otimes \vert 0\rangle \langle 0\vert_C +\widetilde{U}_0\otimes \widetilde{U}_1 \otimes \vert 1\rangle \langle 1\vert_C,
\end{equation}
and acts in the initial state $\vert \varphi_0\rangle\otimes\vert \varphi_1\rangle\otimes \vert \pm_C \rangle$, where $\vert \pm_C \rangle= (\vert 0_C\rangle\pm\vert 1_C\rangle)/\sqrt 2$ is the control state choosing which local operator acts in the separable state. This operator is depicted in Fig.~\ref{uPathS}(a). The state obtained after the action of (\ref{UST}) is
\begin{equation}
    \vert \psi\rangle_2 = \frac 12(U_0\otimes U_1+\widetilde{U}_0\otimes \widetilde{U}_1)  \vert \phi_0\phi_1\rangle \otimes \vert +_C\rangle+\frac 12(U_0\otimes U_1-\widetilde{U}_0\otimes \widetilde{U}_1) \vert \phi_0\phi_1\rangle \otimes \vert -_C\rangle.
\end{equation}
\begin{figure}[bth]
\centering
\begin{tabular}{ccc}
\includegraphics[scale=0.35]{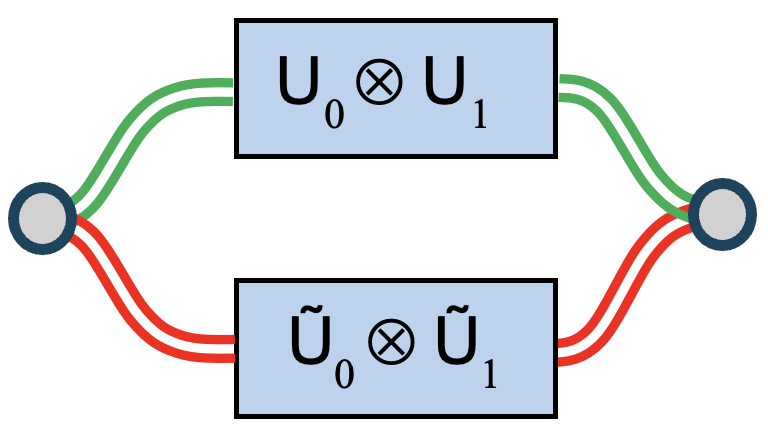} & \hspace{10pt} &\includegraphics[scale=0.3]{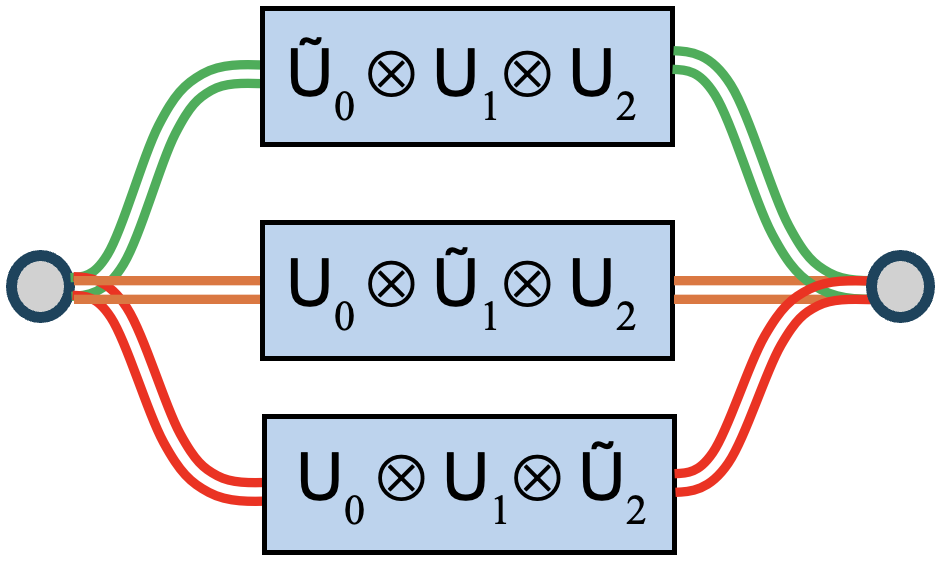}\\
(a) & &(b)
\end{tabular}
\caption{\label{uPathS} Appropriately arranged local unitary operators, coherently superposed along distinct paths, act on an initially fully separable state to generate entanglement. A control system determines which local operator act on the separable state. Panels (a) and (b) schematically illustrate respectively the operators ${\cal U}_2$ in Eq. (\ref{UST}) ${\cal U}_3$ in Eq. (\ref{UST3}).}
\end{figure}
 After a projective measurement on the control basis of $\left\{\vert \pm_C\rangle\right\}$, the resulting two-qubit state reads
\begin{equation}\label{psipm}
    \vert \psi_\pm\rangle_2 =\frac1{\sqrt {N_\pm}} (U_0\otimes U_1\pm \widetilde{U}_0\otimes \widetilde{U}_1)  \vert \phi_0\phi_1\rangle.
\end{equation}
where $N_\pm$ is a normalization constant. We now determine the conditions under which the state (\ref{psipm}) becomes maximally entangled. Without loss of generality, we focus on the state $\lvert \psi_+ \rangle_2$, since it is locally unitary (LU) equivalent to $\lvert \psi_- \rangle_2$ through the transformation $\mathbb{I} \otimes P$, where $P \lvert \phi_1 \rangle = - \lvert \phi_1 \rangle$.

\begin{theorem}[Bell states condition] The state in Eq. (\ref{psipm}) is maximally entangled if and only if the two branches of the superposition are mutually orthogonal, namely
\begin{equation}\label{cond1}
    \langle \phi_k\vert \widetilde{U}_k^\dagger U_k\vert \phi_k\rangle  {\color{black} = 0}, \quad k=0,1.
\end{equation}
\end{theorem}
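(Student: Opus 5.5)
The plan is to reduce everything to the two overlaps between the branches and compute an entanglement measure for the (unnormalized) two-qubit state explicitly. Write $|a_k\rangle=U_k|\phi_k\rangle$ and $|b_k\rangle=\widetilde U_k|\phi_k\rangle$ for $k=0,1$; these are unit vectors. Set $\alpha=\langle a_0|b_0\rangle$ and $\beta=\langle a_1|b_1\rangle$, so that condition (\ref{cond1}) reads $\alpha=\beta=0$. The state (\ref{psipm}) is proportional to $|\Psi\rangle=|a_0a_1\rangle+|b_0b_1\rangle$. Its norm is $N_+=\langle\Psi|\Psi\rangle=2+2\,\mathrm{Re}(\alpha\beta)$, and we assume $N_+>0$ so that the measurement outcome occurs.

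\textbf{Concurrence of $|\Psi\rangle$.} For a pure two-qubit state with coefficient matrix $M$ (so that $|\Psi\rangle=\sum_{ij}M_{ij}|ij\rangle$), the concurrence is $C=2|\det M|/N_+$. Maximal entanglement is equivalent to $C=1$, or equivalently to the reduced state being $\mathbb I/2$. Here $M=A B^{T}$, where $A$ is the matrix with columns $|a_0\rangle,|b_0\rangle$ and $B$ is the matrix with columns $|a_1\rangle,|b_1\rangle$. Hence $|\det M|=|\det A|\,|\det B|$. Each $|\det A|^2$ is a Gram determinant, so $|\det A|^2=1-|\alpha|^2$ and $|\det B|^2=1-|\beta|^2$. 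This gives the closed form
\begin{equation*}
C=\frac{\sqrt{(1-|\alpha|^2)(1-|\beta|^2)}}{1+\mathrm{Re}(\alpha\beta)} .
\end{equation*}

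\textbf{Sufficiency.} If $\alpha=\beta=0$, the formula gives $C=1$ immediately. Equivalently, $\{|a_k\rangle,|b_k\rangle\}$ are then orthonormal bases, and $|\Psi\rangle/\sqrt2$ is already in Schmidt form with equal coefficients. It is therefore LU-equivalent to $|\Phi^+\rangle$ via the local unitaries $|a_k\rangle\mapsto|0\rangle$, $|b_k\rangle\mapsto|1\rangle$.

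\textbf{Necessity.} This is the step I expect to be the main obstacle. The chain of inequalities is
\begin{equation*}
1+\mathrm{Re}(\alpha\beta)\ \ge\ 1-|\alpha||\beta|\ \ge\ \sqrt{(1-|\alpha|^2)(1-|\beta|^2)} ,
\end{equation*}
where the second inequality is equivalent to $(|\alpha|-|\beta|)^2\ge0$. So $C=1$ forces $|\alpha|=|\beta|=r$ and $\alpha\beta=-r^2$, but not necessarily $r=0$.

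A direct check shows the loophole is genuine. Take $|a_k\rangle=|0\rangle$ and $|b_k\rangle=ir|0\rangle+\sqrt{1-r^2}|1\rangle$ with $0<r<1$. Then $\alpha=\beta=ir$, and the normalized coefficient matrix is proportional to a unitary, so the state is maximally entangled even though the branches are not orthogonal.

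I would therefore prove the proposition in the following corrected form. The state is maximally entangled if and only if $|\alpha|=|\beta|$ and $\alpha\beta=-|\alpha||\beta|$. This contains (\ref{cond1}) as the sufficient case, and it reduces to (\ref{cond1}) exactly when one adds the natural extra assumption that the overlaps do not satisfy this phase-matched relation, for instance when $\mathrm{Re}(\alpha\beta)\ge0$. With that hypothesis stated, the equality analysis of the two inequalities above forces $r=0$ and completes the "only if" direction.
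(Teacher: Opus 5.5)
Your proposal is correct. Your sufficiency argument coincides with the paper's: under (\ref{cond1}) the vectors $U_k|\phi_k\rangle$ and $\widetilde U_k|\phi_k\rangle$ form orthonormal bases, so the post-selected state is already a Schmidt decomposition with equal coefficients.

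For necessity you take a genuinely different, quantitative route, and it is the one that actually settles the question. The paper only asserts that maximal entanglement ``requires the two product states to contribute with equal weights and orthogonal support.'' That assertion is never justified, and it is false: a maximally entangled state can be an equal-weight sum of two non-orthogonal product vectors.

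I checked your closed form $C_+=\sqrt{(1-|\alpha|^2)(1-|\beta|^2)}/(1+\mathrm{Re}(\alpha\beta))$, including the factorization $M=AB^{T}$ and the Gram-determinant step. Your equality analysis is also right. Your example is a genuine counterexample within the paper's own setting. Take $|\phi_k\rangle=|0\rangle$, $U_k=\mathbb I$, and any unitary $\widetilde U_k$ with $\widetilde U_k|0\rangle=ir|0\rangle+\sqrt{1-r^2}\,|1\rangle$. Then $\langle\phi_k|\widetilde U_k^\dagger U_k|\phi_k\rangle=-ir\neq 0$, while the normalized coefficient matrix is $1/\sqrt{2}$ times a unitary.

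So the ``only if'' half of the proposition cannot be proved when read, as the paper does, for the single outcome $|\psi_+\rangle_2$. Your corrected characterization is the right statement: given $N_+>0$, maximal entanglement holds if and only if $|\alpha|=|\beta|$ and $\alpha\beta=-|\alpha\beta|$.

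One suggestion for the repair: rather than imposing the ad hoc hypothesis $\mathrm{Re}(\alpha\beta)\ge 0$, use the other measurement outcome. The same computation gives $C_-=\sqrt{(1-|\alpha|^2)(1-|\beta|^2)}/(1-\mathrm{Re}(\alpha\beta))$. Hence $C_+\neq C_-$ whenever $\mathrm{Re}(\alpha\beta)\neq 0$ and the numerator is nonzero.

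In particular, the paper's ``without loss of generality'' reduction of $|\psi_-\rangle_2$ to $|\psi_+\rangle_2$ via $\mathbb I\otimes P$ is not valid in general. A unitary $P$ on the second qubit can flip the relative sign of the two branches only if $\beta=0$.

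If one instead requires that both post-selected outcomes be maximally entangled, your equality conditions force $\alpha\beta=-|\alpha\beta|$ and $\alpha\beta=+|\alpha\beta|$ at the same time. That gives $\alpha\beta=0$, and together with $|\alpha|=|\beta|$ this is exactly (\ref{cond1}). Under that reading the biconditional is true, and your formula proves it in one line.
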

\begin{proof}
Necessity. To prove this result, note that maximal entanglement requires the two product states appearing in Eq. (\ref{psipm}) to contribute with equal weights and orthogonal support. Since each branch corresponds to a local unitary transformation acting on the initial separable state, orthogonality is achieved precisely when the local overlaps vanish as expressed in Eq. (\ref{cond1}). Under these conditions, the resulting state becomes locally unitary equivalent to a Bell state. 

Sufficiency. If the conditions in Eq. (\ref{cond1}) are satisfied, the two product states appearing in Eq. (\ref{psipm}) are orthogonal. Since the coefficients of both terms are equal to $1/{\sqrt 2}$, Eq. (\ref{psipm}) is already a Schmidt decomposition. Hence the reduced density matrix is maximally mixed, and the generated state is maximally entangled (i.e., locally unitary equivalent to a Bell state).
\end{proof}
As a particular example, choose $U_0 = \widetilde{U}_{\color{black} 1} = \sigma_1$ and $U_1 = \widetilde{U}_{\color{black} 0} = \sigma_3$, and take $\lvert \phi_k \rangle = \sqrt{\alpha}\,\lvert 0 \rangle + \sqrt{1-\alpha}\,\lvert 1 \rangle$, with $\alpha \in [0,1]$. Remarkably, the final state is independent of the parameter $\alpha$ and coincides with the Bell state $\lvert \Phi^+ \rangle=(\vert00\rangle+\vert 11\rangle)/\sqrt 2$.

In Ref.~\cite{koudia2023}, entanglement generation was analyzed through a superposition of causal orders, where the ordering of the local operations constitutes an additional degree of freedom. As a consequence, the entanglement conditions depend on both the properties of the local transformations and the coherence between different causal structures. By contrast, in the path-superposition framework considered here, the ordering of the operations remains fixed and the only source of nonclassicality is the coherent superposition of alternative trajectories. The resulting entanglement conditions therefore acquire a particularly simple form, being determined solely by orthogonality relations between the local branches.

Unlike the conventional quantum circuit used to generate a Bell state, typically based on a fixed sequence of a Hadamard gate followed by a controlled-NOT (CNOT) gate, the procedure introduced in Eq. (\ref{UST}) relies on a fundamentally different mechanism rooted in coherent control of operations rather than direct entangling interactions. In the standard circuit, entanglement is generated deterministically through a nonlocal two-qubit gate (CNOT), which directly couples the qubits and creates correlations via interaction. In contrast, the scheme in Eq. (\ref{UST}) employs only local unitary operations acting independently on each subsystem. By themselves, these operations cannot generate entanglement. The key resource enabling entanglement in this case is the coherent superposition of alternative operational paths, controlled by a control system. The operator ${{\cal U}_2}$ conditionally applies different pairs of local unitaries depending on the state of the control qubit, effectively placing the evolution in a superposition of distinct transformations. Entanglement then emerges only after a projective measurement on the control system, which creates interference between these alternative evolutions. Therefore, unlike the conventional circuit where entanglement arises from explicit two-qubit interactions within a definite causal structure, here it is generated indirectly through quantum interference and post-selection, highlighting a qualitatively different paradigm based on coherent control and measurement.

The previous results can be straightforwardly extended to the $n$-qubit case, yielding the GHZ state by considering the action of the operator
\begin{equation}\label{ghzNq}
    {\cal U}_N=U_0\otimes U_1\otimes \cdots \otimes U_{n-1}\otimes \vert0\rangle\langle0\vert_C+\widetilde U_0\otimes \widetilde U_1\otimes \cdots \otimes \widetilde U_{n-1}\otimes \vert1\rangle\langle1\vert_C.
\end{equation}
Thus, the following Proposition can be set forth
\begin{theorem}[GHZ-states condition]
The coherent superposition of alternative sets of local operations described by Eq. (\ref{ghzNq}) generates $n$-qubit GHZ-type states whenever the corresponding local orthogonality conditions are satisfied for every subsystem.
\end{theorem}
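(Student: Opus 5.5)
We follow the same route as the Bell-state case (the first Proposition) and construct the required local unitaries explicitly. Let ${\cal U}_N$ act on $\vert\phi_0\phi_1\cdots\phi_{n-1}\rangle\otimes\vert\pm_C\rangle$. Expanding the control in the $\{\vert\pm_C\rangle\}$ basis gives, exactly as before, the post-measurement state
\begin{equation*}
\vert\psi_\pm\rangle_N=\frac{1}{\sqrt{N_\pm}}\Bigl(\bigotimes_{k=0}^{n-1}U_k\vert\phi_k\rangle\pm\bigotimes_{k=0}^{n-1}\widetilde U_k\vert\phi_k\rangle\Bigr).
\end{equation*}
It suffices to treat the $+$ sign. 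The $-$ sign is obtained through the local unitary $\mathbb I\otimes\cdots\otimes\mathbb I\otimes P$ with $P\widetilde U_{n-1}\vert\phi_{n-1}\rangle=-\widetilde U_{n-1}\vert\phi_{n-1}\rangle$ and $PU_{n-1}\vert\phi_{n-1}\rangle=U_{n-1}\vert\phi_{n-1}\rangle$. Such a $P$ exists because the two vectors are orthogonal under the hypothesis.

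The key step uses the local orthogonality conditions
\begin{equation*}
\langle\phi_k\vert\widetilde U_k^\dagger U_k\vert\phi_k\rangle=0,\qquad k=0,\dots,n-1,
\end{equation*}
which extend Eq.~(\ref{cond1}) to every subsystem. Define $\vert a_k\rangle=U_k\vert\phi_k\rangle$ and $\vert b_k\rangle=\widetilde U_k\vert\phi_k\rangle$. Each is a unit vector in $\mathbb C^2$, and by hypothesis $\langle b_k\vert a_k\rangle=0$. So $\{\vert a_k\rangle,\vert b_k\rangle\}$ is an orthonormal basis of the $k$-th qubit.

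This has two consequences. First, the cross term $\prod_k\langle a_k\vert b_k\rangle$ vanishes, so $N_+=2$. Second, the local unitary $V_k=\vert 0\rangle\langle a_k\vert+\vert 1\rangle\langle b_k\vert$ maps $\vert a_k\rangle\mapsto\vert0\rangle$ and $\vert b_k\rangle\mapsto\vert1\rangle$. Therefore
\begin{equation*}
\Bigl(\bigotimes_k V_k\Bigr)\vert\psi_+\rangle_N=\frac{1}{\sqrt2}\bigl(\vert0\cdots0\rangle+\vert1\cdots1\rangle\bigr)=\vert\mathrm{GHZ}_n\rangle ,
\end{equation*}
which shows that $\vert\psi_+\rangle_N$ is LU-equivalent to the GHZ state. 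As a consistency check, every single-qubit reduced state is then maximally mixed, and for $n=2$ the construction reduces to the sufficiency part of the first Proposition.

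The main obstacle is deciding how strong a claim to make. The statement only asserts "whenever", that is, sufficiency. A converse could fail partially. Orthogonality of the two global product branches needs only one vanishing local overlap. However, if some overlap $\langle a_k\vert b_k\rangle\neq0$, then qubit $k$ is not maximally mixed, and the state is at best a generalized (non-maximal) GHZ-type state. Such a state belongs to the GHZ SLOCC class but is not LU-equivalent to $\vert\mathrm{GHZ}_n\rangle$. I would therefore prove sufficiency as above. I would then remark that LU-equivalence to the standard GHZ state requires all $n$ conditions: the single-qubit marginals of $\vert\mathrm{GHZ}_n\rangle$ are all $\mathbb I/2$, and this forces each $\vert a_k\rangle\perp\vert b_k\rangle$.
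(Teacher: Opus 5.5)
Your sufficiency argument is correct and is the paper's proof made explicit. The paper only asserts that the two orthogonal, equally weighted product branches can be mapped by local unitaries onto $|0\cdots0\rangle$ and $|1\cdots1\rangle$. Your $V_k=|0\rangle\langle a_k|+|1\rangle\langle b_k|$ supplies exactly that map, and your $P=|a_{n-1}\rangle\langle a_{n-1}|-|b_{n-1}\rangle\langle b_{n-1}|$ handles the minus outcome more carefully than the paper's condition $P|\phi_1\rangle=-|\phi_1\rangle$.

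The one thing to fix is your closing converse remark, which goes beyond the statement and is false for $n=2$. Take $|\phi_k\rangle=|0\rangle$, $U_0=U_1=\mathbb I$, $\widetilde U_0=H$ (Hadamard) and $\widetilde U_1=-H\sigma_1$. Then $|\psi_+\rangle_2=\tfrac12(|00\rangle+|01\rangle-|10\rangle+|11\rangle)$, which is maximally entangled even though the two local overlaps are $1/\sqrt2$ and $-1/\sqrt2$. The same example also refutes the ``only if'' half of Proposition 1.

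For $n\ge3$ your remark is true, but your reasoning only covers the case where some other overlap already vanishes. If all $s_k=\langle a_k|b_k\rangle$ are nonzero, you have to compute the marginals directly. Requiring $\rho_k=\mathbb I/2$ for every $k$ forces $\mathrm{Re}\prod_j s_j=-|s_k|^2$ and $s_k=-\prod_{j\ne k}\overline{s_j}$. Hence all $|s_k|$ equal a common $r$ with $r=r^{n-1}$. The value $r=1$ is excluded because it makes the branches parallel, so $r=0$.
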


\begin{proof}
The result follows directly from Proposition~1. The orthogonality conditions
$
\langle \varphi_k|\widetilde U_k^\dagger U_k|\varphi_k\rangle=0,
\quad k=0,\ldots,n-1,
$
ensure that the two $n$-qubit product branches appearing in Eq.~(5) are mutually orthogonal. Since both branches contribute with equal amplitudes, the post-selected state is an equal superposition of two orthogonal product states. By means of local unitary transformations, these branches can always be mapped onto $|00\cdots0\rangle$ and $|11\cdots1\rangle$, respectively. Therefore, the generated state is locally unitary equivalent to the $n$-qubit GHZ state.

\end{proof}
 Similarly, the $n$-qubit W state can be obtained. As an illustrative case, we consider $n=3$. Two sets of unitary operators $\left \{ U_k\right \}$ and $\left \{ \widetilde U_k\right \}$ for $k=0,1,2$ can be used to construct the operator
\begin{equation}\label{UST3}
    {\cal U}_3=\widetilde U_0\otimes U_1\otimes U_2\otimes \vert 0\rangle\langle0\vert_C+ U_0\otimes \widetilde U_1\otimes U_2\otimes \vert 1\rangle\langle1\vert_C+U_0\otimes U_1\otimes \widetilde U_2\otimes \vert 2\rangle\langle2\vert_C,
\end{equation}
The action of this operator is depicted in Fig.\ref{uPathS}(b) and acts on the composite state $\vert \phi_0\rangle\otimes \vert \phi_1\rangle \otimes \vert \phi_2\rangle \otimes\vert \psi\rangle_c$, 
where the control qubit is prepared in one of the states of the discrete Fourier transform (DFT) basis:
\begin{equation}\label{DFTbasis}
    \begin{array}{ll}
         \vert +_C\rangle=(\vert 0_C\rangle+\vert 1_C\rangle_c+\vert 2_C\rangle)/\sqrt3,   &
         \vert -_C\rangle=(\vert 0_C\rangle+\omega \vert 1_C\rangle+\omega^2\vert 2_C\rangle)/\sqrt3, \\[1em]
         \vert \times_C \rangle=(\vert 0_C\rangle+\omega^2 \vert 1_C\rangle+\omega \vert 2_C\rangle)/\sqrt3
    \end{array}
\end{equation}
with $\omega=e^{2\pi i/3}$. To simplify the calculations, let us assume that the control state is initially in the state $\vert+_C\rangle$. Indeed,
\begin{equation}\label{psiplus}
\begin{array}{ll}
    \vert \psi\rangle_3 =\displaystyle \frac1{\sqrt 3}\left(\widetilde U_0\otimes U_1\otimes U_2\vert \phi_0\rangle \vert\phi_1\rangle\vert\phi_2\rangle\vert 0\rangle_c+U_0\otimes \widetilde U_1\otimes U_2\vert \phi_0\rangle \vert\phi_1\rangle\vert\phi_2\rangle\vert 1\rangle_c\right.\\[1em]
    
    \hspace*{5cm} \displaystyle\left.+U_0\otimes U_1\otimes \widetilde U_2\vert \phi_0\rangle \vert\phi_1\rangle\vert\phi_2\rangle\vert 2\rangle_c\right).
\end{array}
\end{equation}
A measurement on the basis (\ref{DFTbasis}) would yield
\begin{equation}\label{psi3p}
    \vert\psi_+\rangle_3=\frac1{\sqrt{N_+}}\left(\widetilde U_0\otimes U_1\otimes U_2+U_0\otimes \widetilde U_1\otimes U_2+U_0\otimes U_1\otimes \widetilde U_2\right)\vert \phi_0\rangle \vert\phi_1\rangle\vert\phi_2\rangle,
\end{equation}
\begin{equation}\label{psi3m}
    \vert\psi_-\rangle_3=\frac1{\sqrt{N_-}}\left(\widetilde U_0\otimes U_1\otimes U_2+{\color{black} \omega} U_0\otimes \widetilde U_1\otimes U_2+\omega^{\color{black} 2} U_0\otimes U_1\otimes \widetilde U_2\right)\vert \phi_0\rangle \vert\phi_1\rangle\vert\phi_2\rangle,
\end{equation}
\begin{equation}\label{psi3x}
    \vert\psi_\times\rangle_3=\frac1{\sqrt{N_\times}}\left(\widetilde U_0\otimes U_1\otimes U_2+\omega^{\color{black}  2} U_0\otimes \widetilde U_1\otimes U_2+{\color{black} \omega} U_0\otimes U_1\otimes \widetilde U_2\right)\vert \phi_0\rangle \vert\phi_1\rangle\vert\phi_2\rangle.
\end{equation}
The states given by Eqs. (\ref{psi3m})–(\ref{psi3x}) differ only by relative phase factors inherited from the Fourier basis of the control system. Since these phases can be removed through local phase transformations acting on the individual qubits, the three states are locally unitary equivalent. Consequently, it is sufficient to analyze the state (\ref{psi3p}) without loss of generality.
\begin{theorem}[W-state condition]
 Let the three operational branches generated by ${\cal U}_3$ satisfy the orthogonality conditions
\begin{equation}\label{cond3q}
\langle \phi_k|\tilde U_k^\dagger U_k|\phi_k\rangle=0,
\qquad k=0,1,2.
\end{equation}
Then the post-selected states given by Eqs. (9)–(11) are locally unitary equivalent to the three-qubit W state.
\end{theorem}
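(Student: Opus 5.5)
Since all branches act on the same product state, I will reduce everything to a choice of local orthonormal bases. For each qubit $k$, set $|a_k\rangle = U_k|\phi_k\rangle$ and $|b_k\rangle=\widetilde U_k|\phi_k\rangle$. Both are unit vectors, because the operators are unitary, and condition (\ref{cond3q}) says exactly that $\langle a_k|b_k\rangle = \langle \phi_k|U_k^\dagger \widetilde U_k|\phi_k\rangle = 0$ (the complex conjugate of the stated overlap). So $\{|a_k\rangle,|b_k\rangle\}$ is an orthonormal basis of the $k$-th qubit. Let $V_k$ be the local unitary defined by $V_k|a_k\rangle=|0\rangle$ and $V_k|b_k\rangle=|1\rangle$. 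Applying $V_0\otimes V_1\otimes V_2$ to the unnormalized vector in Eq.~(\ref{psi3p}) sends the three terms $|b_0a_1a_2\rangle$, $|a_0b_1a_2\rangle$, $|a_0a_1b_2\rangle$ to $|100\rangle$, $|010\rangle$, $|001\rangle$.

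These three vectors are mutually orthogonal, since any two of them differ in two tensor factors that are orthogonal. The norm is therefore additive: $N_+=3$, and likewise $N_-=N_\times=3$, because the phases $1,\omega,\omega^2$ have unit modulus. Hence $(V_0\otimes V_1\otimes V_2)|\psi_+\rangle_3=|W\rangle=(|100\rangle+|010\rangle+|001\rangle)/\sqrt3$. For Eq.~(\ref{psi3m}) the image is $(|100\rangle+\omega|010\rangle+\omega^2|001\rangle)/\sqrt3$. I will remove these phases with the local diagonal unitaries $\mathrm{diag}(1,\omega)$ on qubit 1 and $\mathrm{diag}(1,\omega^2)$ on qubit 2. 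Each acts trivially on $|0\rangle$, so it multiplies only the term in which that qubit is excited by the required phase. Eq.~(\ref{psi3x}) is handled the same way with $\omega$ and $\omega^2$ swapped. Composing gives an explicit local unitary mapping each post-selected state to $|W\rangle$.

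The main obstacle is a subtle point rather than a computation. Unlike Proposition~2, we cannot take two independent orthogonal product branches and map them arbitrarily, because the three branches share components: $|a_k\rangle$ appears in two branches. LU-equivalence to W (and not GHZ or a product state) therefore depends on using a single basis per qubit consistently across all branches. The construction of $V_k$ from the pair $(|a_k\rangle,|b_k\rangle)$ does exactly this, and orthogonality of each pair is precisely what makes $V_k$ unitary. I would also note that the phase-removal step uses only local diagonal unitaries, which justifies the earlier remark that Eqs.~(\ref{psi3p})--(\ref{psi3x}) are LU-equivalent.
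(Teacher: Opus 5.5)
Your proof is correct and is essentially the paper's argument. Orthogonality lets each post-selected state be written as $(|100\rangle+e^{i\phi_1}|010\rangle+e^{i\phi_2}|001\rangle)/\sqrt3$ in suitable local bases, and local phase gates then remove the phases; your explicit $V_k$, built from the orthonormal pair $(U_k|\phi_k\rangle,\widetilde U_k|\phi_k\rangle)$, makes precise the consistent per-qubit basis change that the paper leaves implicit.

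One small slip: to \emph{remove} the phases in Eq.~(\ref{psi3m}) you need the conjugate gates $\mathrm{diag}(1,\omega^2)$ on qubit 1 and $\mathrm{diag}(1,\omega)$ on qubit 2, and correspondingly for Eq.~(\ref{psi3x}). The gates you wrote instead map $|W\rangle$ onto those states, which still establishes LU-equivalence.
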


\begin{proof}
The conditions (\ref{cond3q}) ensure that the three computational-basis contributions appearing in (\ref{psi3p}) are mutually orthogonal and occur with equal amplitudes. Therefore the state $\vert \psi_+\rangle_3$ can be written in the form
\begin{equation}
|\psi_+\rangle_3=
\frac{1}{\sqrt3}
\left(
|100\rangle
+
e^{i\phi_1}|010\rangle
+
e^{i\phi_2}|001\rangle
\right),
\end{equation}
where the phases depend on the control measurement outcome.
Since these phases can be removed through local phase rotations acting independently on the three qubits, the state is locally unitary equivalent to
\begin{equation}
|W\rangle=
\frac{1}{\sqrt3}
\left(
|100\rangle+|010\rangle+|001\rangle
\right).
\end{equation}
The same argument applies to Eqs. (\ref{psi3m}) and (\ref{psi3x}), which are already locally unitary equivalent to Eq. (\ref{psi3p}). Therefore all three post-selected states belong to the W-entanglement class. 
\end{proof}
The results above show that Bell, GHZ, and W-state generation follow from the same basic mechanism. In all cases, coherent control produces a post-selected superposition of mutually orthogonal product branches with equal amplitudes. Depending on the number and structure of the branches, this superposition becomes locally unitary equivalent to a Bell state, a GHZ-type state, or a W-type state.
Finally we remark that Propositions 2 and 3 demonstrate that the same coherent-control architecture can generate representative states from the two inequivalent classes of genuine three-qubit entanglement, namely the GHZ and W classes. This highlights the versatility of the proposed framework and its ability to generate distinct multipartite entanglement structures through suitable choices of local operations and control measurements.

\section{Stochastic states emerging from a channel arrangement}\label{StochasticSection}

The previous coherent example has shown how the path superposition of unitary operations can produce an entangled state. Can other communication architectures, for instance, indefinite causal order, generate them? can this entanglement be created through non-coherent operations? For this purpose, in this section, we introduce both architectures to investigate if Pauli noise \cite{flammia2020}, as a source of non-coherent operations, is able to produce entanglement. The entanglement analysis will be developed in the next section.

In the current analysis, we will consider a couple of single qubit Pauli channels, each of the form \cite{nielsen1,delgado3}:

\begin{equation} \label{paullichannel}
    \rho_{out} = \Lambda[\rho_{in}] = \sum_{i=0}^3 K_i \rho_{in} K_i^{\dagger} = \sum_{i=0}^3 \alpha_i\sigma_i \rho_{in} \sigma_i^{\dagger} 
\end{equation}

\noindent with $\sigma_i$ the Pauli operators for $i=1,2,3$ and $\sigma_0$ the identity operator. Here, $K_i$ are the corresponding Kraus operators \cite{kraus1} of the operation introduced by the single Pauli channel with $K_i=\sqrt{\alpha_i}\sigma_i, i=0,1,2,3$. Note that $\sum_{i=0}^3 K_i \rho K_i^{\dagger}={\mathbf 1}$ and therefore $\sum_{i=0}^3 \alpha_i=1$.

Here, we will work with the couple of Pauli channels $\Lambda_1(\rho)$ and $\Lambda_2(\rho)$ under two different arrangements: a) a Path Superposition (PS) arrangement and b) an Indefinite Causal Order (ICO) arrangement \cite{chiribella2013}. These arrangements are shown in Figure \ref{Fig1}. Figure \ref{Fig1}(a) shows the PS arrangement; note that each line represents the two involved systems. Similarly, Figure \ref{Fig1}(b) shows an ICO arrangement. In each case, the boxes show a pair of channels $\Lambda_1(\rho_1)$ and $\Lambda_2(\rho_2)$ (the subscript here only refers to two different channels, not to the system being applied) applied in each case to different systems (in that order, first for system 1 and then for system 2), symbolically represented as $\Lambda_1 \otimes \Lambda_2$ or $\Lambda_2 \otimes \Lambda_1$. In each case, the path or causal order is stated by a quantum control system $\vert \Phi_C \rangle = \sqrt{p_0} \vert 0_C \rangle + \sqrt{p_1} \vert 1_C \rangle$, thus if the control state is $\vert 0_C \rangle$ the green path is followed, but if the control state is $\vert 1_C \rangle$ then the red one is followed.

\begin{figure}[tbh]
\centering
    \begin{tabular}{cc}
         \includegraphics[width=0.30\linewidth]{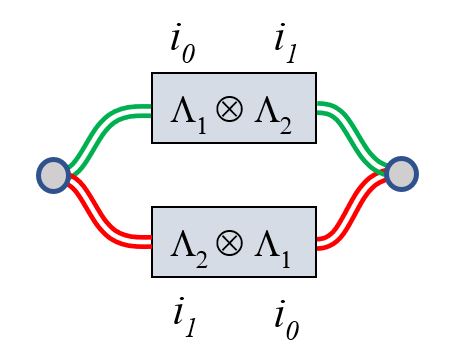}  &
         \raisebox{0.012\height}
         {\includegraphics[width=0.29\linewidth]{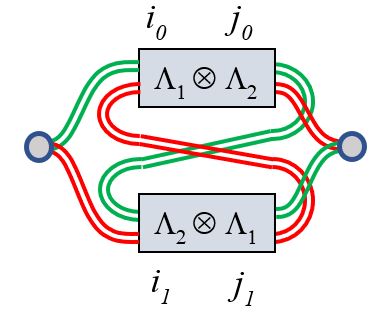}}  \\
         (\textbf{a}) & (\textbf{b})
    \end{tabular}
    \caption{Two-qubit communication architectures implementing a couple of channels, $\Lambda_1$ and $\Lambda_2$:  (\text{A}) Path superposition, and (\textbf{B}) Indefinite Causal Order. }\label{Fig1}
\end{figure}

The output state, for the operations performed by the arrangements of Pauli channels presented before, should be obtained by considering the associated Kraus operators (here, we will restrict the input state as separable and mixed $\rho_1 \otimes \rho_2$):

\begin{eqnarray}
    {\tilde \Lambda}(\rho_1 \otimes \rho_2) &=& \sum_{\{\mathcal C\}} K^{\{\mathcal C\}} \rho_1 \otimes \rho_2 {K^{\{\mathcal C\}}}^\dagger 
\end{eqnarray}

\noindent by denoting in general $\{\mathcal C\}$ the set of general indices with respect to the coherent components of the channels. Such Kraus operators could be written in terms of the form coefficients \cite{Delgado2023} of the arrangement:

\begin{eqnarray}
    K^{\{\mathcal C\}} = \sum_{\alpha \beta \gamma \delta \epsilon} C^{\{\mathcal C\}}_{\alpha \beta \gamma, \delta \epsilon} \vert \alpha_1 \beta_2 \gamma_C \rangle \langle \delta_1 \epsilon_2 \vert
\end{eqnarray}

For both cases considered here, $\{{\mathcal C}\}$ becomes the set $i_0,i_1,j_0,j_1$ for PS and ICO corresponding to each different channel involved. In general, in the following, Greek scripts sum over $0$ and $1$, while Latin scripts sum over $0,...,3$. Note that Figure \ref{Fig1} has maintained the scripts related to each channel to facilitate interpretation. Then, the form coefficients have the following expressions (in this representation, it is not necessary to explicitly include the expression $\vert \Phi_C \rangle$ beyond $p_\gamma$):

\begin{eqnarray}
    C^{i_0,j_0}_{\alpha \beta \gamma, \delta \epsilon} = \sqrt{p_\gamma \alpha^{1+\gamma}_{i_0} \alpha^{1+(\gamma \oplus 1)}_{j_0}} 
    \left( \sigma_{1_{i_{\gamma}}}\right)_{\alpha \delta}\left( \sigma_{2_{j_{\gamma}}}\right)_{\beta \epsilon}
\end{eqnarray}

\noindent for the PS case (Figure \ref{Fig1}(a)), and:

\begin{eqnarray}
    C^{i_0,i_1,j_0,j_1}_{\alpha \beta \gamma, \delta \epsilon} = \sqrt{p_\gamma \alpha^{1}_{i_0} \alpha^{2}_{i_1} \alpha^{2}_{j_0} \alpha^{1}_{j_1}} 
    \left( \sigma_{1_{i_{\gamma \oplus 1}}} \sigma_{1_{i_{\gamma}}}\right)_{\alpha \delta}\left( \sigma_{2_{j_{\gamma \oplus 1}}} \sigma_{2_{j_{\gamma}}}\right)_{\beta \epsilon} 
\end{eqnarray}

\noindent for the ICO case (Figure \ref{Fig1}(b)). There, $\alpha^j_{k}$ corresponds to the $k-$coefficient of the Pauli channel $\Lambda_j$; also, $\sigma_{j_{k}}$, is the $k-$element of the set $\{\sigma_0, \sigma_1, \sigma_2, \sigma_3\}$ (Pauli matrices together with the identity) applied to the subsystem $j$. Subscripts outside the parentheses refer to the correspondent entry within it. Each input state could be written in the Bloch representation $\rho_i = \frac{1}{2}(\sigma_{0_i} + {\vec n}\cdot{\vec \sigma}_i)$, with ${\vec \sigma}=(\sigma_1,\sigma_2,\sigma_3)$, the Pauli matrices vector. However, to simplify our expressions and notation, this representation will be written in a simpler notation $\rho_i = \frac{1}{2} {\vec N}_i \cdot {\vec S}_i, i=1, 2$, with ${\vec N}_i = (1,{\vec n}_i)$ and ${\vec S}_i=(\sigma_{0_i},{\vec \sigma_{i}})$ the $4-$vector extensions for the Bloch vector and the spin vector operator \cite{Delgado2023}, respectively.
Thus, the output state including the control state becomes:

\begin{eqnarray}
    {\tilde \Lambda}(\rho_1 \otimes \rho_2) &=& \frac{1}{4} \sum_{\substack{k,k', \{\mathcal C\} \\ \alpha \beta \gamma \delta \epsilon \\ \alpha' \beta' \gamma ' \delta' \epsilon'}} N_{1_k} N_{2_{k'}} C^{\{\mathcal C\}}_{\alpha \beta \gamma, \delta \epsilon} {C^{\{\mathcal C\}*}_{\alpha' \beta' \gamma', \delta' \epsilon'}} S_{1_{k_{\delta \delta'}}} S_{2_{k'_{\epsilon \epsilon'}}} \vert \alpha \beta \gamma \rangle \langle \alpha' \beta' \gamma' \vert
\end{eqnarray}

\noindent  Concretely, each $N_{j_k}$ is the $k-$component of ${\vec N}_j$ for the subsystem $j$; and $S_{j_{k_{\epsilon \epsilon}}}$ is the $(\epsilon,\epsilon')$ entry of the $k-$component of ${\vec N}_j$ in the subsystem $j$. In addition, considering a stochastic approach where the control state is measured expecting to have an optimal outcome ($\mu=0$) on the basis $\{\vert \psi^\mu_C \rangle = \sum_{i \in \{0,1\}} a_i^\mu \vert i_C \rangle| \mu=0,1\}$, then, the un-normalized output state regarding just the couple of systems:

\begin{eqnarray}
    {\Lambda}(\rho_1 \otimes \rho_2)_{\rm un} &=& \langle \psi_C \vert {\tilde \Lambda}(\rho_1 \otimes \rho_2) \vert \psi_C \rangle \\
    &=& \sum_{\substack{\alpha \beta \\ \alpha' \beta'}} \bigg( \frac{1}{4} \sum_{\substack{k,k', \{\mathcal C\} \\ \gamma \delta \epsilon \\ \gamma' \delta' \epsilon'}} N_{1_k} N_{2_{k'}} C^{\{\mathcal C\}}_{\alpha \beta \gamma, \delta \epsilon} {C^{\{\mathcal C\}*}_{\alpha' \beta' \gamma', \delta' \epsilon'}} S_{1_{k_{\delta \delta'}}} S_{2_{{k'}_{\epsilon \epsilon'}}} {a^\mu_\gamma}^* a^\mu_{\gamma'} \bigg) \vert \alpha \beta \rangle \langle \alpha' \beta' \vert  \\
    &\equiv& \sum_{\substack{\alpha \beta \\ \alpha' \beta'}} \rho_{\alpha \beta, \alpha' \beta'} \vert \alpha \beta \rangle \langle \alpha' \beta' \vert
\end{eqnarray}

\noindent Finally, the normalized output state and its probability of success in the stochastic process are as follows.

\begin{eqnarray}\label{finalexps}
    {\Lambda}(\rho_1 \otimes \rho_2) &=& \frac{{\Lambda}(\rho_1 \otimes \rho_2)_{\rm un}}{P_{\mu=0}} \\
    P_{\mu=0} &=& {\rm Tr}({\Lambda}(\rho_1 \otimes \rho_2)_{\rm un}) \\
    &=& \frac{1}{4} \sum_{\substack{k,k', \{\mathcal C\} \\ \alpha \beta \gamma \delta \epsilon \\ \gamma' \delta' \epsilon'}} N_{1_k} N_{2_{k'}} C^{\{\mathcal C\}*}_{\alpha \beta \gamma, \delta \epsilon} {C^{\{\mathcal C\}}_{\alpha \beta \gamma', \delta' \epsilon'}} S_{1_{k_{\delta \delta'}}} S_{2_{k'_{\epsilon \epsilon'}}} {a^\mu_\gamma}^* a^\mu_{\gamma'}
\end{eqnarray}

Interestingly, we note that both arrangements depend on the same number of parameters (fourteen parameters as a total): $\{\alpha^i_j|i=1,2; j=1,...,3\}$ for the couple of channels $i=1,2$ and in spite of $\alpha^i_0=1-\sum_{j=1}^3 \alpha^i_j$. In addition, $p_0, a_0$ (clearly $p_1=1-p_0, a_1^2=1-a_0^2$), ${\vec n}_i, i=1,2$ for the two separable initial states (six more parameters). 
\section{Entanglement}\label{EntanglementSection}
In the previous sections, we established that coherent control of alternative operational pathways can generate entanglement both in purely unitary scenarios and in noisy Pauli-channel architectures. We now quantify the resulting correlations using concurrence and analyze their interplay with purity and success probability. This analysis allows us to characterize the operational regimes in which coherent and non-coherent control most efficiently generate entanglement.

\subsection{Concurrence}
A widely used two-qubit entanglement measure is the concurrence $\mathcal C$ \cite{HW97, W98}. For a two-qubit system described by a density matrix $\rho$, this is defined as:
\begin{equation}
\mathcal{C}(\rho) = \max(0, \lambda_1 - \lambda_2 - \lambda_3 - \lambda_4)
\end{equation}
where $\lambda_i$ are the eigenvalues, in descending order, of the matrix:
\begin{equation}
R = \sqrt{\sqrt{\rho} (\sigma_2 \otimes \sigma_2) \rho^* (\sigma_2 \otimes \sigma_2) \sqrt{\rho}},
\end{equation}
here, the operation $^*$ represents the complex conjugate. For pure states $|\psi\rangle$, the concurrence reduces to:
\begin{equation}
\mathcal{C}(|\psi\rangle) = 2 |\langle \psi | \sigma_2 \otimes \sigma_2 | \psi^* \rangle|,
\end{equation}
which corresponds to twice the determinant of the coefficient matrix in the computational basis.

\subsection{Concurrence of output states for representative channel families}

\subsubsection{Entanglement generated by coherent control transitioning into non-coherent control}

We analyze the concurrence $\mathcal{C}(\rho_{\rm out})$ with $\rho_{\rm out}=\Lambda(\rho_1 \otimes \rho_2)$ in (\ref{finalexps}), as a function of $p=p_0, q=s_0$ (just $\rho$ in the following, for simplicity). We add a third parameter $s$ that shows the channels involved. In any case, we depart from the pure input state $|0\rangle\otimes|0\rangle.$ For simplicity, we consider the case with $\Lambda_1$ defined by $\alpha^1_0=1$. Thus, by taking $\Lambda_2$ characterized by $\alpha^2_0=1-s, \alpha^2_1=s$. The results are shown in Figure \ref{Fig2}. 

\begin{figure}[bht]
\begin{center}
    \begin{tabular}{cc}
        {\multirow{3}[2]{*}[20mm]{\raisebox{0.5\height}{\includegraphics[width=0.350\linewidth]{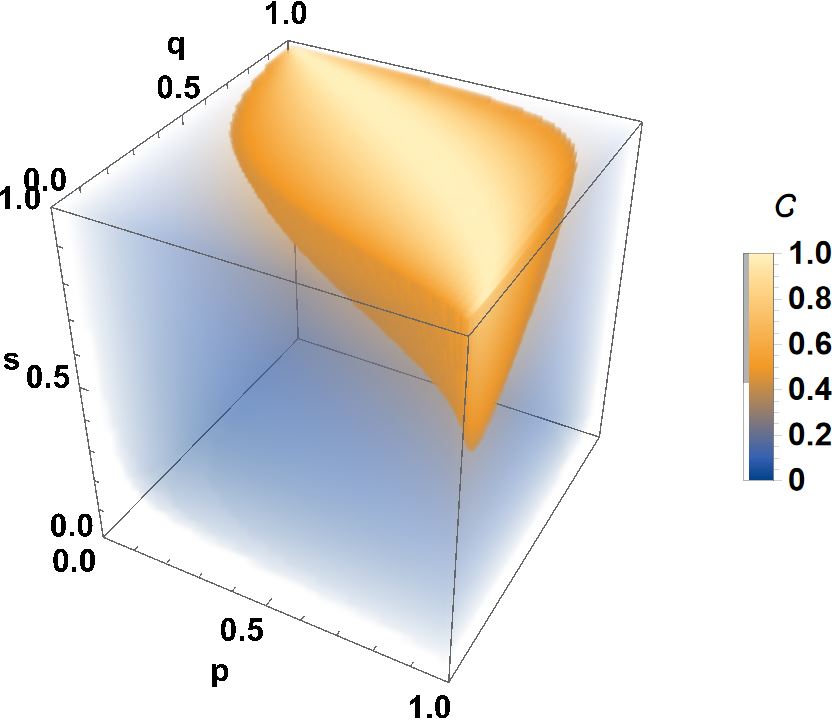}}}}  &
        {\includegraphics[width=0.580\linewidth]{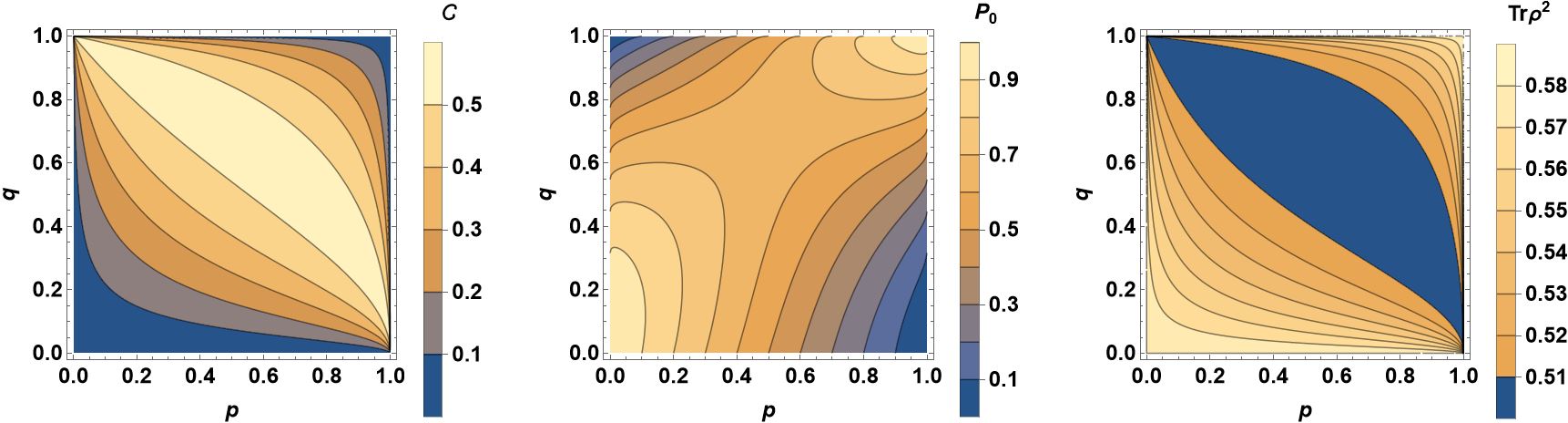}}  \\
        & (\textbf{c}) \\
        & \includegraphics[width=0.580\linewidth]{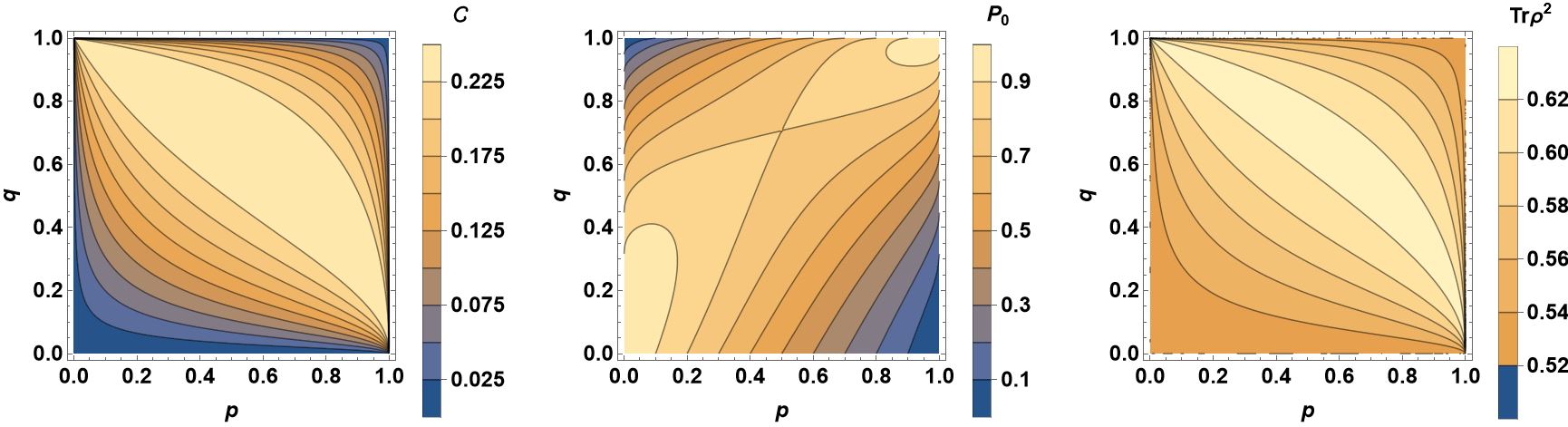} \\
        (\textbf{a}) & (\textbf{d}) \\
        \includegraphics[width=0.390\linewidth]{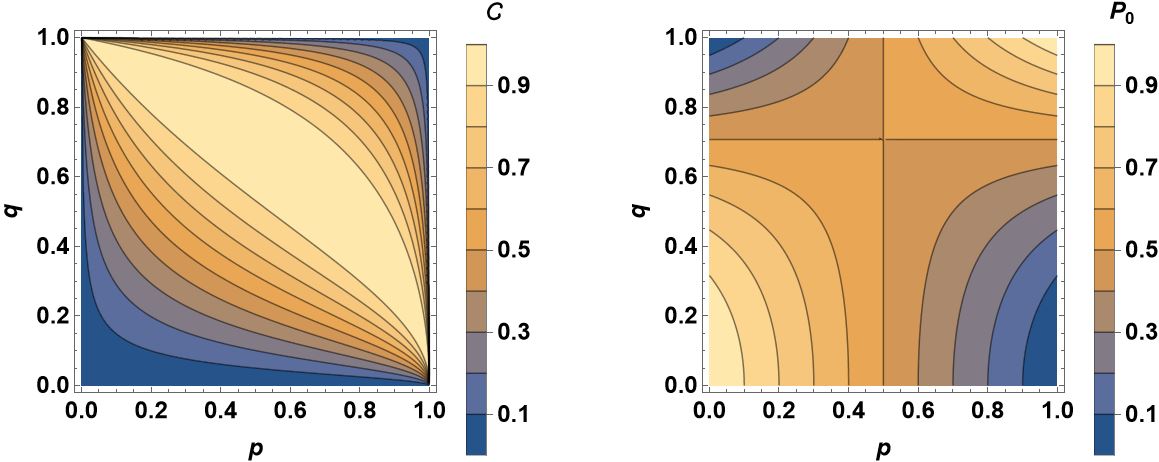} & \includegraphics[width=0.580\linewidth]{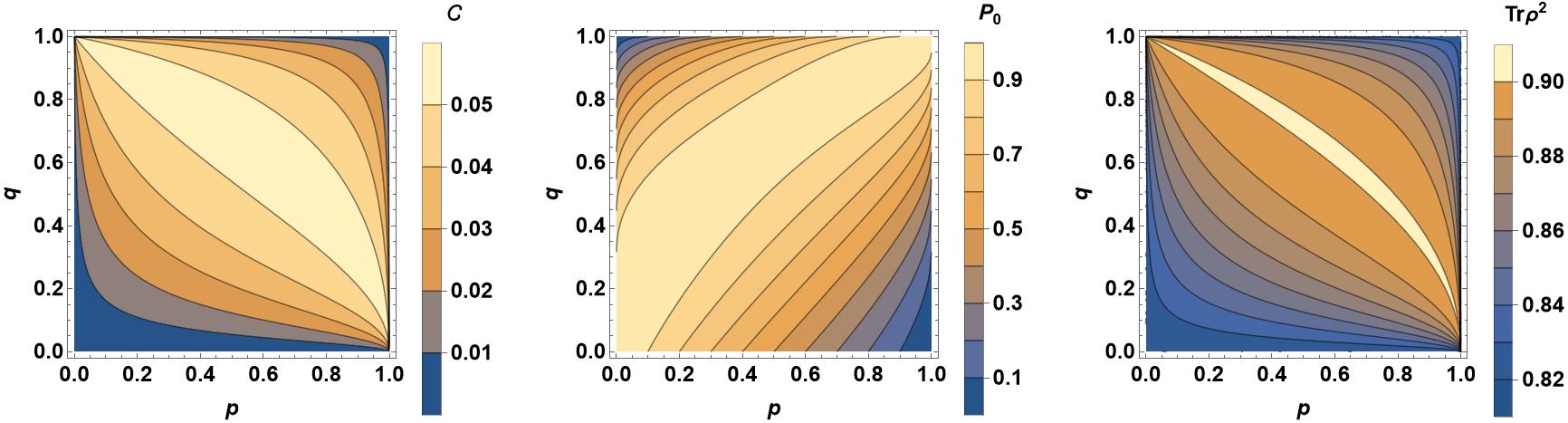} \\
        (\textbf{b}) & (\textbf{e}) \\
    \end{tabular}
\end{center}
    \caption{Concurrence $\mathcal C$ in PS for the family of channels $\alpha^1_0=1, \alpha^2_0=1-s, \alpha^2_1=s$ in colour, in agreement with each legend colour besides. (\text{a}) As a density plot function of $p,q,s$, and some contour plots of $\mathcal C$, $P_0$, and ${\rm Tr}\rho^2_{\rm out}$ together for (\textbf{b}) $s=1$ (here, ${\rm Tr}\rho^2_{\rm out}=1$ is not reported),  (\textbf{c}) $s=\frac{7}{10}$, (\textbf{d}) $s=\frac{4}{10}$, and (\textbf{e}) $s=\frac{1}{10}$.}\label{Fig2}
\end{figure}

A general double-layer density map is shown in Figure \ref{Fig2}a for $\mathcal{C}(\rho_{\rm out})$ as a function of $p,q,s$. Solid colors involve the values referred to the color legend marked by the gray region on the left side in the bar, while transparent colors comprehend the values marked by the white region in the bar. The other panels report slice contour plots for certain values of $s$. Each contour plot includes additional contour plots for $P_0$ and ${\rm Tr}\rho^2$ as functions of $p$ and $q$. Figure \ref{Fig2}b shows the case for $s=1$ (transparent channel). In this case, only separable states are obtained, thus giving ${\rm Tr}\rho^2=1$, which is then not reported. Other cases include cases (c) $s=\frac{7}{10}$, (d) $s=\frac{4}{10}$, and (e) $s=\frac{1}{10}$ reporting $\mathcal{C}(\rho),P_0,{\rm Tr}\rho^2$ together. In general, it is observed that the case $s=1$ reaches the highest concurrence values $\mathcal{C}(\rho)=1$ near the diagonal in the plane $p,q$. The concurrence values decrease near zero with the lower values of $s$ reported in the remaining panels. In general, it is also observed that $P_0$ is low (but not necessarily the lowest) for the most entanglement states because the highest values of $P_0$ are located at the corners $p,q \approx 0,1$. In addition, purity reduces for the $s$ middle values. The structure observed in Fig.~\ref{Fig2} reflects the interference origin of the generated entanglement. Maximal concurrence appears near the diagonal regions of the parameter space where the two operational branches contribute with comparable amplitudes, thereby maximizing coherent interference after post-selection. In contrast, highly asymmetric configurations suppress interference and consequently reduce entanglement generation.

The inverse relation between concurrence and success probability reveals the probabilistic nature of the mechanism. Operational configurations producing the strongest interference-induced correlations occur precisely in regions where the selected post- measurement outcomes become less likely. This trade-off highlights the stochastic character of entanglement generation under coherent control.

\subsubsection{Entanglement generated by non-coherent control schemes}

A second example is the case with $\Lambda_1$ defined by $\alpha^1_0=1$ and $\Lambda_2$ defined by $\alpha^2_i=s, i=1,2,3, s \in [0, \frac{1}{3}]$. The results are shown in Figure \ref{Fig3} equivalently to those in Figure \ref{Fig2}. Figure \ref{Fig3}a is a general view of $\mathcal{C}(\rho_{\rm out})$ as a function of $p,q,s$ geometrically identical to before, but in a range below $\frac{1}{2}$. As before, the remaining panels report slice contour plots for (b) $s=0$ (the transparent channel, where only separable states with ${\rm Tr}\rho^2=1$ are obtained), (c) $s=\frac{1}{6}$, (d) $s=\frac{1}{4}$ (depolarizing channel) and (e) $s=\frac{1}{3}$ (ICO channel \cite{delgado3}) reporting $\mathcal{C}(\rho),P_0,{\rm Tr}\rho^2$ together. A higher value (despite being limited) for the concurrence is observed with increasing $s$. The maximum value $\mathcal{C}(\rho)=\frac{1}{2}$ for $s=\frac{1}{3}$ is reached near the diagonal as before. In addition, $P_0$ is low for the most entanglement states, with the highest values of $P_0$ near the corners $p,q \approx 0,1$. Purity reduces for the lower values of $s$. 
The reduction of purity in the intermediate regions of the parameter space reflects the increasing influence of decoherence introduced by the Pauli channels. Nevertheless, coherent control still enables the activation of nonlocal correlations despite the mixed nature of the resulting states.

\begin{figure}[htb]
\begin{center}
    \begin{tabular}{cc}
        {\multirow{3}[2]{*}[20mm]{\raisebox{0.5\height}{\includegraphics[width=0.350\linewidth]{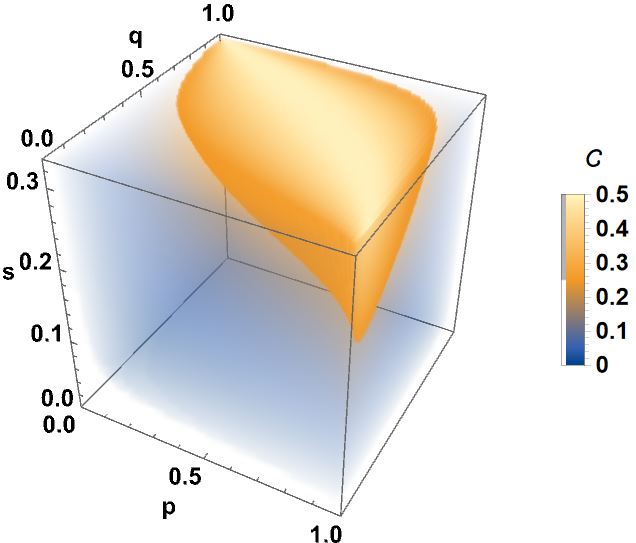}}}}  &
        {\includegraphics[width=0.580\linewidth]{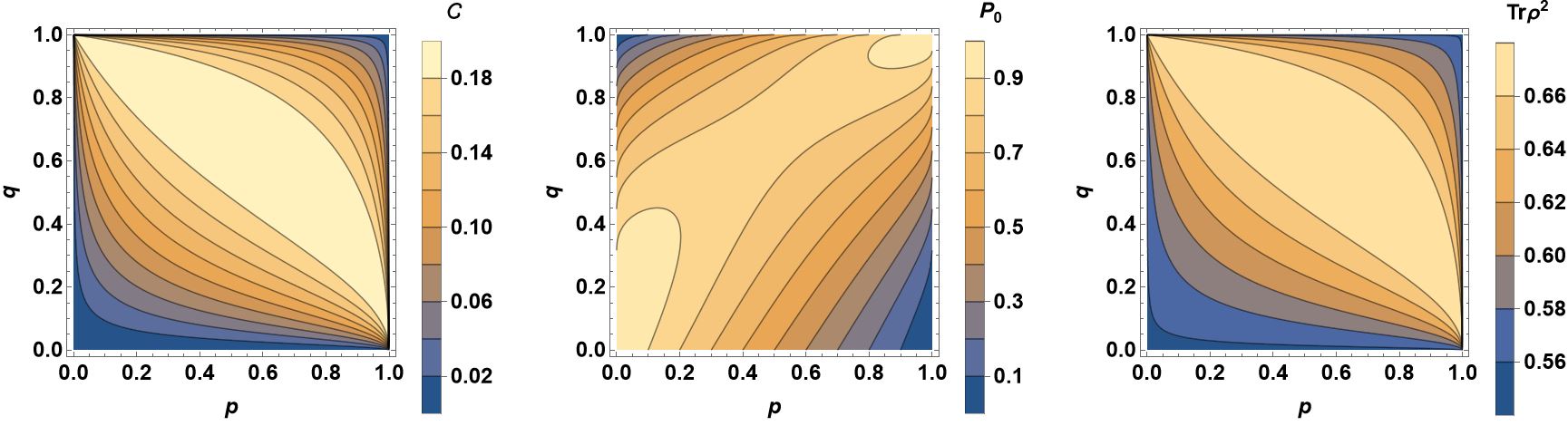}}  \\
        & (\textbf{c}) \\
        & \includegraphics[width=0.580\linewidth]{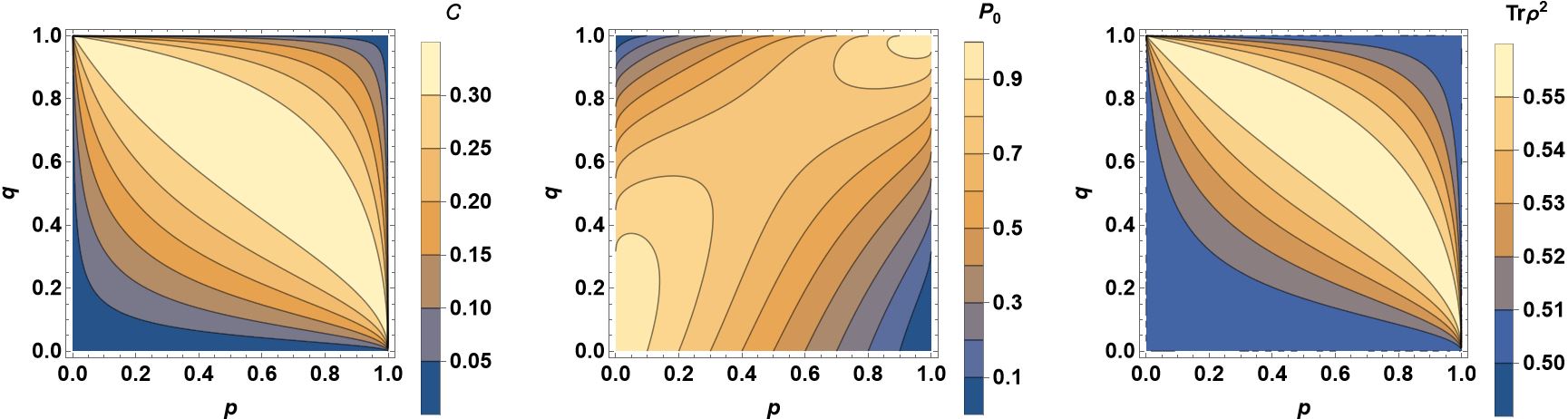} \\
        (\textbf{a}) & (\textbf{d}) \\
        \includegraphics[width=0.390\linewidth]{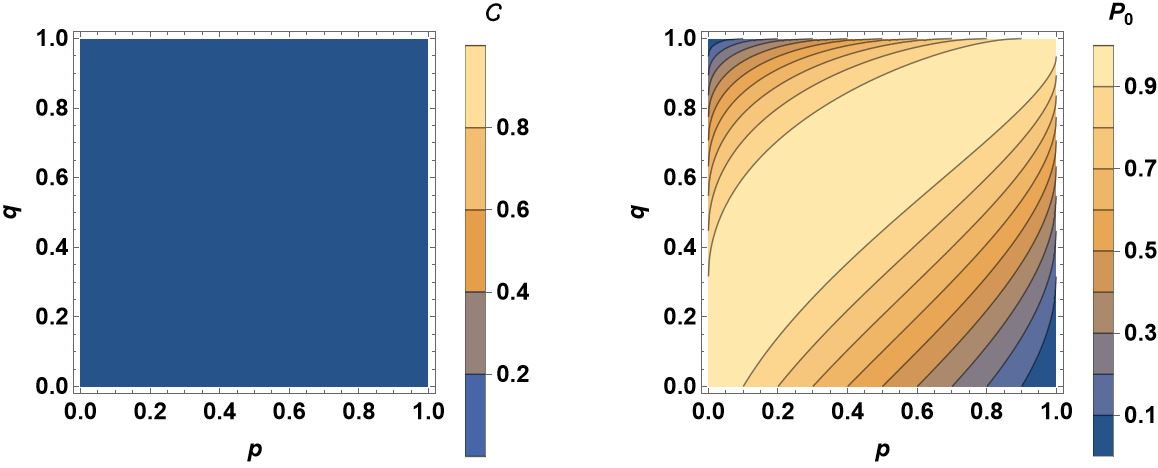} & \includegraphics[width=0.580\linewidth]{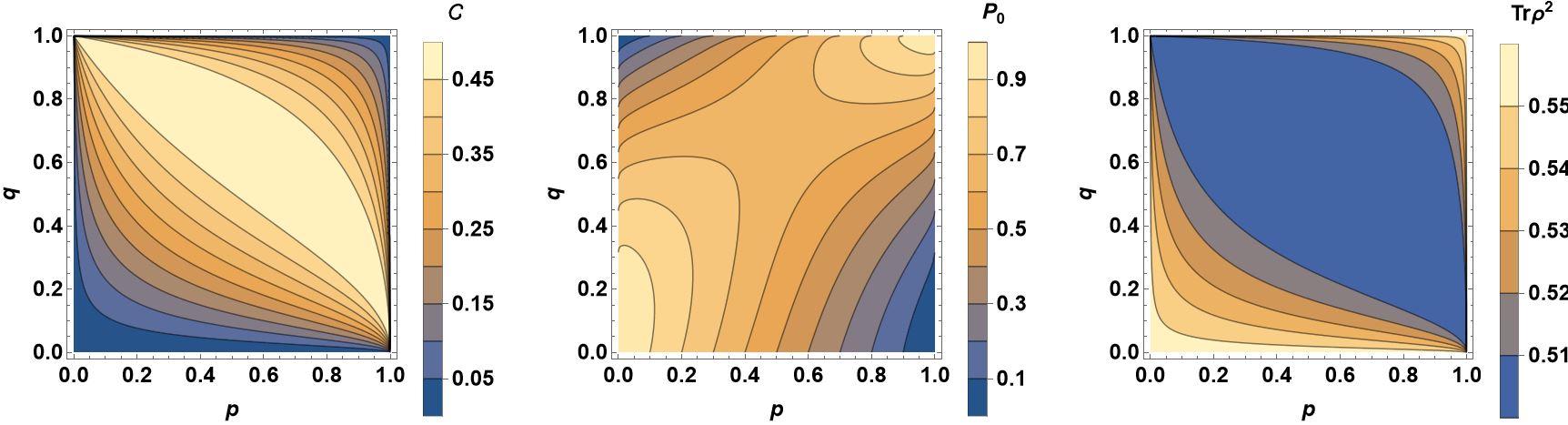} \\
        (\textbf{b}) & (\textbf{e}) \\
    \end{tabular}
\end{center}
    \caption{Concurrence $\mathcal C$ in PS for the family of channels $\alpha^1_0=1, \alpha^2_i=s, i=1,2,3, s \in [0,\frac{1}{3}]$ in colour, in agreement with each legend colour besides. (\text{a}) As a density plot function of $p,q,s$, and some contour plots of $\mathcal C$, $P_0$, and ${\rm Tr}\rho^2_{\rm out}$ together for (\textbf{b}) $s=0$ (here, ${\rm Tr}\rho^2_{\rm out}=1$ is not reported),  (\textbf{c}) $s=\frac{1}{6}$, (\textbf{d}) $s=\frac{1}{4}$, and (\textbf{e}) $s=\frac{1}{3}$.}\label{Fig3}
\end{figure}

The third case corresponds to $\Lambda_1$ defined by $\alpha^1_0=1$ and $\Lambda_2$ defined by $\alpha^2_1=s, \alpha^2_2=1-s$. The results are shown in Figure \ref{Fig4} as before. Figure \ref{Fig3}a shows the general view of $\mathcal{C}(\rho_{\rm out})$ as a function of $p,q,s$. In this case, the double-layer density map shows in solid colors the values marked by the gray marks on the left side of the legend. We find an independent behavior from $s$. The values reach $\mathcal{C}(\rho)=1$. In fact, $\rho_{\rm out}$ takes in this case the form:

\begin{figure}[thb]
\begin{center}
    \begin{tabular}{cc}
    \includegraphics[width=0.330\linewidth]{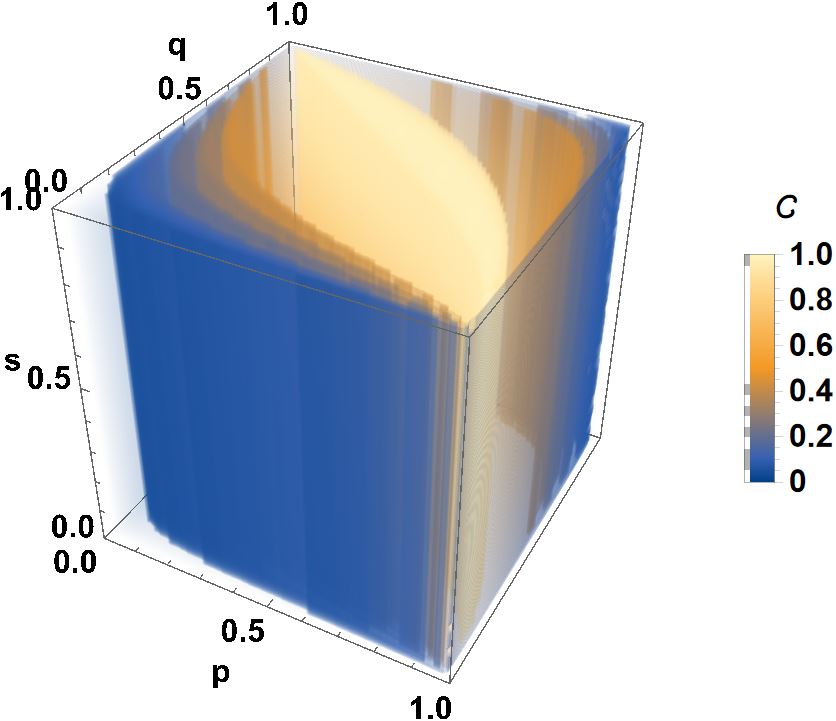} & \includegraphics[width=0.630\linewidth]{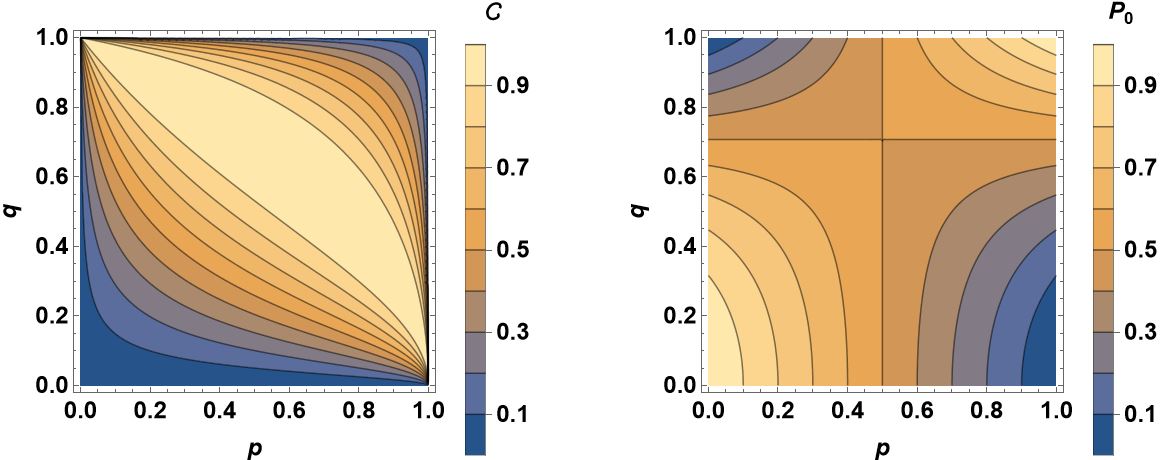} \\
    (\textbf{a}) & (\textbf{b}) 
    \end{tabular}
\end{center}
    \caption{Concurrence $\mathcal C$ in PS for the family of channels $\alpha^1_0=1, \alpha^2_1=s=1-\alpha^2_2$ in colour, in agreement with each legend colour besides. (\text{a}) As a density plot function of $p,q,s$, and the contour plots of $\mathcal C$ and $P_0$ for all $s$. ${\rm Tr}\rho^2_{\rm out}=1$ for all $s$.}\label{Fig4}
\end{figure}

\begin{eqnarray}
\rho_{\rm out}=
\left(
\begin{array}{cccc}
 0 & 0 & 0 & 0 \\
 0 & \frac{p q^2}{p \left(2 q^2-1\right)-q^2+1} & \frac{q \sqrt{(1-p) p \left(1-q^2\right)}}{p \left(2 q^2-1\right)-q^2+1} & 0 \\
 0 & \frac{q \sqrt{(1-p) p \left(1-q^2\right)}}{p \left(2 q^2-1\right)-q^2+1} & \frac{(1-p) \left(1-q^2\right)}{p \left(2 q^2-1\right)-q^2+1} & 0 \\
 0 & 0 & 0 & 0 \\
\end{array}
\right)
\end{eqnarray}

\noindent explaining the independent behavior of $s$.

Thus, Figure \ref{Fig4}b shows a slice of it for any $s$, together with the contour plot for $P_0$ whose behavior is similar to the previous cases. The states become pure for the overall values of $s$.

\subsubsection{Entangled states generated by the overall Pauli channels in their parametric space}

In agreement with the previous analysis, we propose to analyze the generation of stochastic entanglement by setting $q=\sqrt{p}$, but extending the analysis over the whole parametric space of the Pauli channels $(\alpha_1,\alpha_2,\alpha_3)$. This space has previously been used, identifying several emblematic channels (Figure \ref{Fig5}a). In this case, a direct calculation gives the following $\rho_{\rm out}$:

\begin{eqnarray}
\rho_{\rm out}=
    \left(
\begin{array}{cccc}
 \frac{1-A}{1-2A p (1-p)} & 0 & 0 & 0 \\
 0 & \frac{A p^2}{1-2A p (1-p)} & \frac{A (1-p) p}{1-2A p (1-p)} & 0 \\
 0 & \frac{A (1-p) p}{1-2A p (1-p)} & \frac{A (1-p)^2}{1-2A p (1-p)} & 0 \\
 0 & 0 & 0 & 0 \\
\end{array}
\right)
\end{eqnarray}

\noindent where $A=\alpha_1+\alpha_2$, thus becoming independent of $\alpha_3$ as our previous analysis suggested. From this expression, the following outcomes are easily obtained:

\begin{eqnarray}
\mathcal{C}(\rho_{\rm out}) &=& \frac{2A p (1-p)}{1-2A p (1-p)}\\
P_0 &=& 1-2A p (1-p) \\
{\rm Tr}\rho^2_{\rm out} &=& \frac{4 - 8A + 5A^2}{(A-2)^2}
\end{eqnarray}

Figure \ref{Fig5}b shows all previous functions together in the $p, A$ plane. Thus, $\mathcal{C}(\rho_{\rm out})$ increases as a function of the increment of $A$, but also mainly around $p=\frac{1}{2}$ (peach surface). The maximum is reached when $p=\frac{1}{2}$. Instead, $P_0$ (blue surface) decreases inversely, reducing the probability of obtaining the maximum entanglement. Finally, ${\rm Tr} \rho^2_{\rm out}$ (green surface) becomes independent of $p$, with pure states in $A=0,1$, and with the most mixed states for $A=\frac{2}{3}$. Figure \ref{Fig5}c shows the particular case for $p=\frac{1}{2}$ where the maximum entangled states are reached. 
The global structure of the Pauli-channel parameter space indicates that entanglement generation depends primarily on the combined weight of the nontrivial Pauli components rather than on the specific orientation of the channel within the parameter space. This suggests that the operational interference responsible for entanglement generation is governed mainly by the balance between identity and non-identity contributions within the controlled channels.

\begin{figure}[htb]
\begin{center}
    \begin{tabular}{ccc}
    \includegraphics[width=0.350\linewidth]{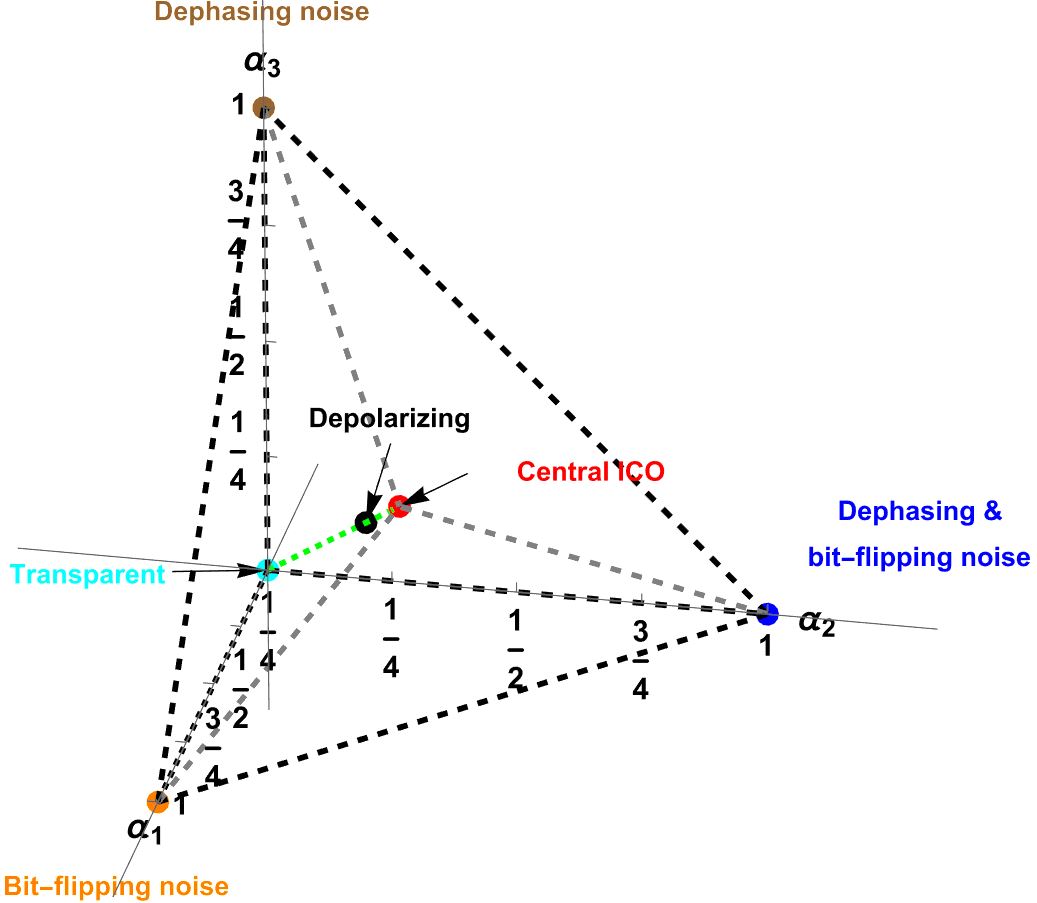} & 
    \includegraphics[width=0.255\linewidth]{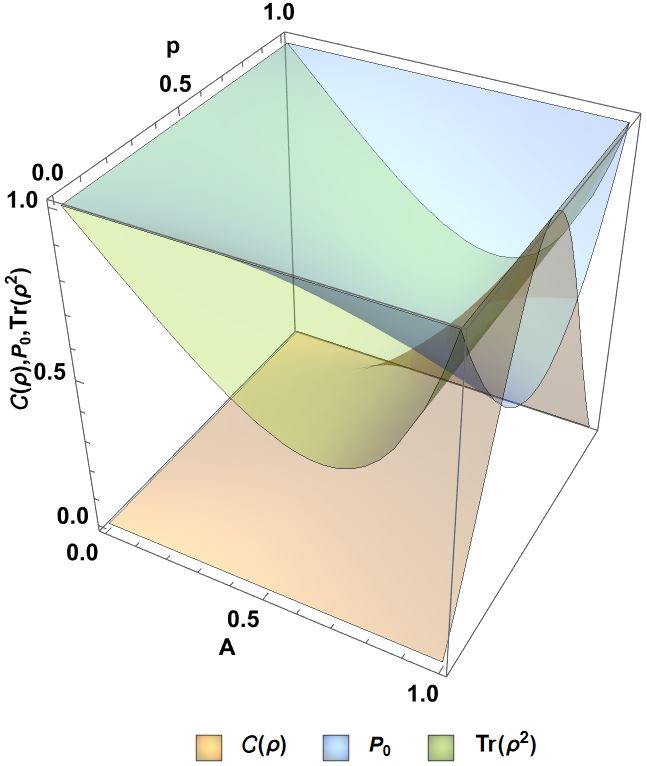} \hspace{5mm} & \includegraphics[width=0.270\linewidth]{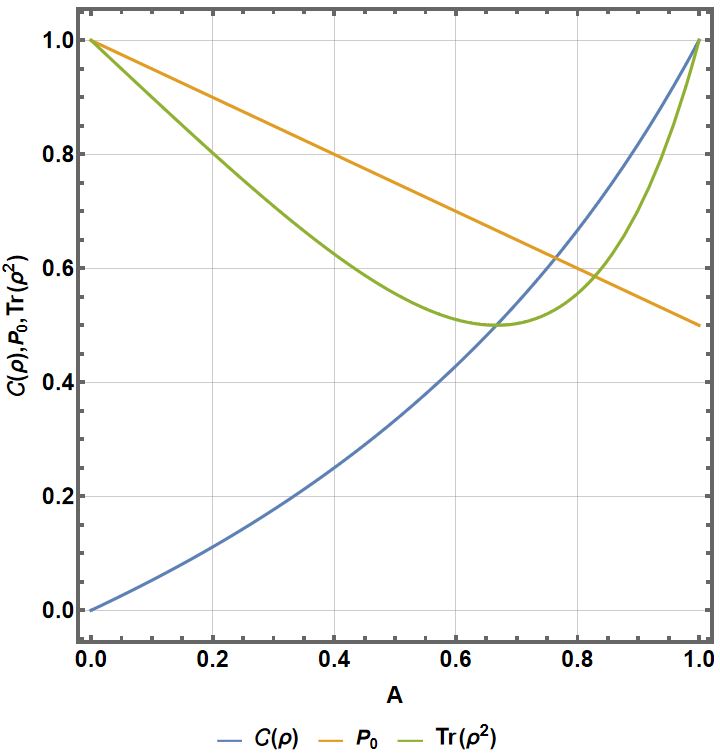} \\ 
    (\textbf{a}) & (\textbf{b}) & (\textbf{c}) 
    \end{tabular}
\end{center}
    \caption{(\text{a}) Pauli channels parametric space with some emblematic channels, (\text{b}) plots of $\mathcal{C}(\rho),P_0,{\rm Tr}\rho_{\rm out}^2$ as functions of $A$ and $p$, and (\text{c}) particular case for $p=\frac{1}{2}$.}\label{Fig5}
\end{figure}
\section{Discussions}
In standard quantum-information architectures, entanglement is commonly generated through nonlocal interactions implemented by entangling gates such as controlled-NOT or controlled-phase operations acting directly on the target subsystems \cite{nielsen1}. In contrast, the present framework employs only local operations acting independently on each subsystem. Entanglement emerges from coherent interference between alternative operational pathways selected by a control system and revealed after post-selection.
From this perspective, the generated entanglement should be interpreted as an interference phenomenon rather than as a direct consequence of subsystem interaction. Each operational branch individually preserves separability, while coherent superposition between branches enables the appearance of nonlocal correlations after measurement of the control system. This establishes a direct connection between coherent control and interference-induced entanglement generation.

The present approach is related to previous works involving coherent control of quantum channels and indefinite causal order, where superposition of operational structures has been shown to enhance communication and information-processing tasks. In contrast, the main objective here is the direct generation of entanglement from initially separable states through coherent superposition of local operations. The results obtained in the unitary and noisy scenarios further indicate that the same operational principle persists even in the presence of decoherence.
The analysis of Pauli-channel architectures reveals a nontrivial interplay between concurrence, purity, and success probability. Maximal entanglement generation occurs in parameter regimes where coherent interference between alternative evolutions is strongest, although these regions generally correspond to lower post-selection probabilities. Despite the presence of noise, coherent control of alternative operational pathways still enables the activation of nonlocal correlations from separable mixed states.

Overall, these results suggest that coherent superposition of operational structures may itself constitute a generalized resource for entanglement generation. This perspective unifies the deterministic and stochastic regimes analyzed throughout this work under a common mechanism based on interference between alternative quantum evolutions. Such coherent-control architectures may also provide new possibilities for distributed quantum-information processing and programmable quantum communication protocols.

\section{Conclusions}\label{conclu}
In this work, we have analyzed the generation of quantum entanglement through both coherent and non-coherent control strategies, focusing on architectures based on path superposition and related quantum communication frameworks. We first demonstrated that the coherent superposition of alternative sets of local unitary operations enables the deterministic generation of maximally entangled states from fully separable inputs. By explicitly deriving necessary and sufficient conditions for local operators, we established a general procedure to obtain Bell, GHZ, and W states, showing that the resulting states are locally unitary equivalent to the standard representatives of these entanglement classes.
We then extended the analysis to realistic noisy scenarios by considering pairs of Pauli channels arranged under path superposition and indefinite causal order. The controlled nature of these communication architectures introduces the interesting idea of exploiting distant imperfect resources to produce quality entangled resources to be used in a process comprising quantum processing or for quantum cryptography purposes. In this context, we derived closed analytical expressions for the output states obtained from separable mixed inputs and showed that entanglement can emerge in a fundamentally stochastic manner. Using concurrence as a quantitative measure, we identified parameter regimes in which entanglement is generated despite the presence of noise, revealing that quantum control over the ordering or selection of channels can activate correlations that are not accessible through conventional fixed-order implementations. {In particular, these imperfect communication structures that introduce Pauli noise can produce entanglement under a controlled process at distance.

Our results further highlight the interplay between entanglement, success probability, and purity. In particular, we observed that higher levels of entanglement are generally associated with lower success probabilities, and that noise introduces nontrivial trade-offs between coherence and mixedness. By exploring representative families and the full parametric space of Pauli channels, we provided a global characterization of these trade-offs, identifying the operational conditions under which stochastic entanglement generation is optimized.

Overall, this work shows that both coherent and non-coherent control paradigms offer powerful and complementary mechanisms for entanglement generation beyond standard circuit-based approaches. These findings contribute to the understanding of quantum resources under generalized control architectures and may have practical implications for quantum communication, distributed quantum computing, and emerging quantum network technologies.

\section*{Acknowledgements}

The support of the School of Engineering and Sciences of Tecnologico de Monterrey is acknowledged. ME thanks Prof. Karol \.Zyczkowski for fruitful discussions.


\begin{thebibliography}{999}

\bibitem{nielsen1}
Nielsen, M.A.; Chuang, I.L. \emph{Quantum Computation and Quantum Information}; {Cambridge University Press}: {Cambridge, UK},   {2010}.

\bibitem{bengtsson2017}
Bengtsson, I.; Życzkowski, K. \emph{Geometry of Quantum States: An Introduction to Quantum Entanglement}; {Cambridge University Press}: {Cambridge, UK}, {2017}.

\bibitem{yang2022}
Yang, Z.-B.; Wang, Y.-P.; Li, J.; Hu, C.-M.; You, J. Q. Entanglement emerges from dissipation-driven quantum self-organization. {\em J. Magn. Magn. Mater.} {\bf 2022}, \emph{564}, 170139.

\bibitem{wu2017}
Wu, J.-L.; Ji, X.; Zhang, S. Fast adiabatic quantum state transfer and entanglement generation between two atoms via dressed states. {\em Sci. Rep.} {\bf 2017}, \emph{7}, 46255.

\bibitem{bhattacharya2023}
Bhattacharya, U.; Lamprou, T.; Maxwell, A. S.; Ordóñez, A. F.; Pisanty, E.; Rivera-Dean, J.; Stammer, P.; Ciappina, M. F.; Lewenstein, M.; Tzallas, P. Strong-laser-field physics, non-classical light states and quantum information science. {\em Rep. Prog. Phys.} {\bf 2023}, \emph{86}, 094401.

\bibitem{lamprou2024}
Lamprou, T.; Stammer, P.; Rivera-Dean, J.; Tsatrafyllis, N.; Ciappina, M. F.; Lewenstein, M.; Tzallas, P. Recent developments in the generation of non-classical and entangled light states using intense laser–matter interactions. {\em arXiv} {\bf 2024}, arXiv:2410.17452.

\bibitem{gour2025}
Gour, G. Mixed-state entanglement. In \emph{Quantum Resource Theories}; 
{Cambridge University Press}: {Cambridge, UK}, {2025}; pp. 538--623.

\bibitem{Horodecki1998}
Horodecki, M.; Horodecki, P.; Horodecki, R. 
Mixed-State Entanglement and Distillation: Is There a Bound Entanglement in Nature? 
{\em Phys. Rev. Lett.} {\bf 1998}, {\em 80}, 5239--5242. 

\bibitem{Horodecki2000}
Horodecki, M.; Horodecki, P.; Horodecki, R. 
Asymptotic Manipulations of Entanglement Can Exhibit Genuine Irreversibility. 
{\em Phys. Rev. Lett.} {\bf 2000}, {\em 84}, 4260--4263. 

\bibitem{Horodecki2005}
Horodecki, K.; Horodecki, M.; Horodecki, P.; Oppenheim, J. 
Secure Key from Bound Entanglement. 
{\em Phys. Rev. Lett.} {\bf 2005}, {\em 94}, 160502. 

\bibitem{yamasaki2022}
Yamasaki, H.; Morelli, S.; Miethlinger, M.; Bavaresco, J.; Friis, N.; Huber, M. 
Activation of genuine multipartite entanglement: Beyond the single-copy paradigm of entanglement characterisation. 
{\em Quantum} {\bf 2022}, \emph{6}, 695.

\bibitem{rinp2022}
Singh, U.; Bandyopadhyay, S.; Adhikari, S. 
Generation of entanglement in mixed states via quantum operations. 
{\em Results Phys.} {\bf 2022}, \emph{40}, 105830.

\bibitem{moharramipour2024}
Moharramipour, A.; Lessa, L. A.; Wang, C.; Hsieh, T. H.; Sahu, S. 
Symmetry-Enforced Entanglement in Maximally Mixed States. 
{\em PRX Quantum} {\bf 2024}, \emph{5}, 040336.

\bibitem{chiribella2013}
Chiribella, G.; D'Ariano, G. M.; Perinotti, P.; Valiron, B. 
Quantum computations without definite causal structure. 
{\em Phys. Rev. A} {\bf 2013}, \emph{88}, 022318.

\bibitem{rubino2017}
Rubino, G.; Rozema, L. A.; Feix, A.; Araújo, M.; Zeuner, J. M.; Procopio, L. M.; Brukner, Č.; Walther, P. 
Experimental verification of an indefinite causal order. 
{\em Sci. Adv.} {\bf 2017}, \emph{3}, e1602589.

\bibitem{Ebler2018}
Ebler, D.; Salek, S.; Chiribella, G. Enhanced communication with the assistance of indefinite causal order. \em{Phys. Rev. Lett.} \textbf{2018}, \em{120}, 120502.

\bibitem{Procopio2019}
Procopio, L.M.; Delgado, F.; Enr\'{i}quez, M.; Belabas, N.; Levenson, J.A. Communication enhancement through quantum coherent control of N channels in an indefinite causal-order scenario. \em{Entropy} \textbf{2019}, \em{21}, 1012.

\bibitem{Bavaresco2021}
Bavaresco, J.; Murao, M.; Quintino, M.T. Unitary channel discrimination beyond group structures: Advantages of sequential and indefinite-causal-order strategies. {\em arXiv} \textbf{2021}, arXiv:2105.13369.

\bibitem{Zhao2020}
Zhao, X.; Yang, Y.; Chiribella, G. Quantum metrology with indefinite causal order. \emph{Phys. Rev. Lett.} \textbf{2020}, \emph{124}, 190503.

\bibitem{DelSanto2024}
Del Santo, F.; Ebler, D.; Chiribella, G. Noisy quantum parameter estimation with indefinite causal order. \emph{Phys. Rev. A} \textbf{2024}, \emph{109}, 012603.

\bibitem{Abbott2018}
Abbott, A.A.; Wechs, J.; Horsman, C.; Mhalla, M.; Branciard, C. Communication through coherent control of quantum channels. \em{Quantum} \textbf{2018}, \em{2}, 76.

\bibitem{Chiribella2019}
Chiribella, G.; Kristj\'ansson, H. Quantum Shannon theory with superpositions of trajectories. \em{Proc. R. Soc. A} \textbf{2019}, \em{475}, 20180903.

\bibitem{Rubino2021}
Rubino, G.; Rozema, L.A.; Ebler, D.; Kristj\'ansson, H.; Salek, S.; Gu\'erin, P.A.; Abbott, A.A.; Branciard, C.; Brukner, \v{C}.; Chiribella, G.; Walther, P. Experimental quantum communication enhancement by superposing trajectories. \em{Phys. Rev. Research} \textbf{2021}, \em{3}, 013093.

\bibitem{Delgado2023}
Delgado, F. Parametric symmetries in architectures involving indefinite causal order and path superposition for quantum parameter estimation of Pauli channels. \em{Symmetry} \textbf{2023}, \em{15}, 1097.

\bibitem{Mondal2025}
Mondal, S.; Ghosh, P.; Sen, U. Path superposition as a resource for perfect quantum teleportation with separable states. {\em arXiv} \textbf{2025}, arXiv:2505.11398.

\bibitem{CastanosCervantes2024}
Casta\~nos-Cervantes, L.O.; Procopio, L.M.; Enr\'iquez, M. Coherent control of two Jaynes--Cummings cavities. \em{Sci. Rep.} \textbf{2024}, \em{14}, 3790.

\bibitem{sorensen2003}
Sørensen, A. S. and Mølmer, K. Measurement Induced Entanglement and Quantum Computation with Atoms in Optical Cavities, {\em Phys. Rev. Lett.} {\bf 2003}, {em 91}(9), 097905.

\bibitem{kim2012}
Kim, Y. S., Lee, J. C., Kwon, O. et al. Protecting entanglement from decoherence using weak measurement and quantum measurement reversal, {\em Nature Phys} {\bf 2012}, {\em 8}, 117–120.

\bibitem{white2016}
White, T., Mutus, J., Dressel, J. et al. Preserving entanglement during weak measurement demonstrated with a violation of the Bell–Leggett–Garg inequality. {\em npj Quantum Inf.} {\bf 2016},  {\em 2}, 15022.

\bibitem{Grimaudo2020}
Grimaudo, R., Messina, A., Sergi, A., Vitanov, N. V., Filippov, S. N. Two-Qubit Entanglement Generation through Non-Hermitian Hamiltonians Induced by Repeated Measurements on an Ancilla. {\em Entropy} {\bf 2020}, {\em 22}(10), 1184.

\bibitem{koudia2023}
Koudia, S.; Cacciapuoti, A. S.; Caleffi, M. Deterministic generation of multipartite entanglement via causal activation in the quantum internet. {\em IEEE Access} {\bf 2023}, \emph{11}, 73863--73878.

\bibitem{flammia2020}
Flammia, S. T. and Wallman, J. J., Efficient Estimation of Pauli Channels, {\emph ACM Transactions on Quantum Computing}, {\bf 2020}, {\emph 1}(1), 32.

\bibitem{delgado3}
Delgado, F.; Cardoso-Isidoro, Carlos. Performance characterization of Pauli channels assisted by indefinite causal order and post-measurement. {\em Quantum Inf. Comput.}  {\bf 2020}, \emph{20}, 1261--1280.

\bibitem{kraus1}
Kraus, K. {\em States, Effects and Operations: Fundamental Notions of Quantum Theory}; Springer: Berlin, {Germany}, 1983. 

\bibitem{HW97}
Hill, S. A.; Wootters, W. K. Entanglement of a pair of quantum bits. {\em Phys. Rev. Lett.} {\bf 1997}, {\em 78}, 5022.

\bibitem{W98}
Wootters, W.K. Entanglement of formation for any arbitrary state of two qubits, {\em Phys. Rev. Lett.} {\bf 1998}, {\em 80}, 2245.

\end{thebibliography}
\end{document}